 \newcommand{\MATLAB}{\textsc{Matlab}\xspace}
\theoremstyle{plain}
\newtheorem{thm}{Theorem}[]
\theoremstyle{definition}
\newcommand{\bmA}{\mathbf A}
\newcommand{\bmI}{\mathbf I}
\newcommand{\bmT}{\mathbf T}
\newcommand{\bmQ}{\mathbf Q}
\newcommand{\bmD}{\mathbf D}
\newcommand{\bmL}{\mathbf L}
\newcommand{\bmP}{\mathbf P}
\newcommand{\bmG}{\mathbf G}
\newcommand{\bmH}{\mathbf H}
\newcommand{\bmF}{\mathbf F}
\newcommand{\bms}{\mathbf s}
\newcommand{\bmV}{\mathbf V}
\newcommand{\bmU}{\mathbf U}
\newcommand{\bmy}{\mathbf y}
\newcommand{\bmR}{\mathbf R}
\newcommand{\bmX}{\mathbf X}
\newcommand{\bmW}{\mathbf W}
\newcommand{\bmx}{\mathbf x}
\newcommand{\bmn}{\mathbf n}
\newcommand{\bmc}{\mathbf c}
\newcommand{\bme}{\mathbf e}
\newcommand{\bmr}{\mathbf r}
\begin{document}

\title{Parallel and Distributed Hybrid Beamforming for Multicell Millimeter Wave MIMO Full Duplex}  
\author{Chandan~Kumar~Sheemar, Symeon Chatzinotas, Dirk~Slock, Eva Lagunas and Jorge Querol
% <-this % stops a space
 \thanks{ Chandan Kumar Sheemar, Symeon Chatzinotas, Eva Lagunas and Jorge Querol are with the SnT department at the University of Luxembourg (email:\{chandankumar.sheemar,symeon.chatzinotas,eva.lagunas,jorge.querol\}\\@uni.lu). Dirk Slock is with the communication systems department at EURECOM, Sophia Antipolis, France (email:slock@eurecom.fr). } 
 %\thanks{A preliminary version of this work is part of the PhD thesis \cite{sheemar2022hybrid_thesis} and is available on arXiv at }\vspace{-4mm}} 
 }
 \maketitle

 %This article presents two novel hybrid beamforming (HYBF) designs for a multicell massive multiple-input-multiple-output (mMIMO) millimeter wave (mmWave) full duplex (FD) system under limited dynamic range (LDR). Firstly, a novel centralized HYBF (C-HYBF) scheme based on the minorization-maximization (MM) method is presented. However, C-HYBF presents many drawbacks such as high computational complexity, massive communication overhead to transfer complete channel state information (CSI) to the central node (CN) every channel coherence time (CCT), and requirement of expensive computational resources for joint optimization. To overcome these drawbacks, we present a very low-complexity, per-link parallel and distributed HYBF (P$\&$D-HYBF) scheme based on cooperation. Due to the per-link independent decomposition, it enables each FD base station (BS) to solve its local sub-problems independently and in parallel on multiple processors, which leads to significant reduction in the communication overhead. It requires that each FD BS cooperates by exchanging information about its beamformers, which allows each FD BS to adapt its beamformers correctly, leading to a negligible performance loss compared to C-HYBF. Simulation results show that both designs achieve similar performance and outperform the fully digital half duplex (HD) system with only a few radio-frequency (RF) chains. 
\begin{abstract}
Full duplex (FD) is an auspicious wireless technology that holds the potential to double data rates through simultaneous transmission and reception. This paper proposes two innovative designs of hybrid beamforming (HYBF) for a multicell massive multiple-input-multiple-output (mMIMO) millimeter wave (mmWave) FD system. Initially, we introduce a novel centralized HYBF (C-HYBF) scheme, which employs the minorization-maximization (MM) method. However, while centralized beamforming designs offer superior performance, they suffer from high computational complexity, substantial communication overhead, and demand expensive computational resources. To surmount these challenges, we present a framework that facilitates per-link parallel and distributed HYBF (P$\&$D-HYBF) in the mmWave frequency band. This cooperative approach enables each base station (BS) to independently solve its local, low-complexity sub-problems in parallel, resulting in a substantial reduction in communication overhead and computational complexity. Simulation results demonstrate that P$\&$D-HYBF achieves comparable performance to C-HYBF, and with only a few radio-frequency (RF) chains, both designs surpass the capabilities of fully digital half duplex (HD) systems.
\end{abstract}
\begin{IEEEkeywords}
Millimeter Wave, Full duplex,  Distributed Hybrid Beamforming, Minorization Maximization
\end{IEEEkeywords}

 \vspace{-3mm}

\IEEEpeerreviewmaketitle

\section{Introduction} \label{Intro}
\IEEEPARstart{C}{ellular} communication networks are in a perpetual state of evolution, driven by the escalating demands of emerging wireless data services and the need to maintain uninterrupted connectivity. To accommodate the anticipated surge in data volume, the research community has recently shown significant interest in employing ultra-high-frequency bands, notably the millimeter wave (mmWave) band \cite{wang2018millimeter}. In comparison to traditional radio frequency (RF) and microwave bands, the mmWave band provides approximately 200 times greater spectrum. However, mmWave communication systems encounter substantial propagation losses, which are mitigated by employing a substantial number of antennas that can be densely packed due to the shorter wavelength.

In parallel to the development of mmWave half duplex (HD) systems has been the progression of full duplex (FD) technology which enables simultaneous transmission and reception in the same frequency band, which theoretically doubles the spectral efficiency. 
However, FD systems suffer from self-interference (SI), which can be $90-120$ dB higher than the received signal power. Advanced SI cancellation (SIC) techniques are crucial to mitigate the SI power and make FD feasible. Among the techniques being explored for SI management in FD systems, beamforming has exhibited significant potential. Beamforming can be categorized as either fully digital or hybrid beamforming (HYBF). The former necessitates an equal number of radio frequency (RF) chains and antennas, processing the signal in the baseband. On the other hand, the latter requires only a limited number of RF chains and employs lower-dimensional digital processing alongside higher-dimensional analog processing. The implementation of such a solution is highly desirable, as it enables the deployment of cost-effective mmWave FD transceivers \cite{roberts2021millimeter_survey}.

%Hybrid beamforming (HYBF) can enable the mmWave FD transceiver design with a less number of RF chains 
%and is a promising tool also for handling the SI \cite{roberts2021millimeter_survey}.
%however, in practice, its maximum achievable gain remains limited by the non-ideal hardware, including power
%amplifiers (PAs), analog-to-digital-converters (ADCs), etc. Such non-idealities can be considered in the beamforming designs with the limited dynamic range (LDR) model \cite{cirik2015weighted}.  

%Such effect can be considered in the beamforming designs with the LDR noise model \cite{day2012full}, which allows to investigate the effective performance gain of the practical FD systems.

\subsection{Prior Work and Motivation}
Recent studies investigating the potential of HYBF for mmWave FD systems are available in \cite{palacios2019hybrid,lopez2019analog,sheemar2021hybrid_interference,cai2020two,luo2021robust,sheemar2021massive,lopez2019beamformer,da20201,roberts2020hybrid,sheemar2021practical_HYBF}.
In \cite{palacios2019hybrid}, HYBF for a point-to-point massive multiple-input multiple-output (mMIMO) mmWave FD system to optimize the sum-spectral efficiency while keeping the signal level at the input of the analog-to-digital converters (ADCs) under control is investigated. In \cite{lopez2019analog}, a novel
algorithm for the design of constant-amplitude analog precoders
and combiners of an FD mmWave single-stream bidirectional
link is discussed. In \cite{sheemar2021hybrid_interference}, HYBF for an FD mmWave MIMO interference channel is presented. In \cite{cai2020two}, the authors studied the performance of the two-timescale HYBF approach for the mmWave FD relays. Robust HYBF under imperfect channel state information (CSI) is studied in \cite{luo2021robust}. In \cite{sheemar2021massive}, HYBF for integrated access and backhaul is investigated. In \cite{lopez2019beamformer}, the authors presented a novel HYBF design for an amplify and forward mmWave FD relay. In \cite{da20201}, a novel HYBF design for a single-antenna multi-user mmWave FD system with $1$-bit phase resolution is proposed. In \cite{roberts2020hybrid}, the achievable gain of a mmWave FD system with one uplink (UL) and one downlink (DL) user only under the limited receive dynamic range is investigated. In \cite{sheemar2021practical_HYBF}, the potential of HYBF for a single-cell mMIMO mmWave FD system is studied.  

It is noteworthy that the existing literature has not addressed the context of mmWave multi-cell FD systems with multi-antenna users. Furthermore, all the proposed methods thus far have been centralized, which presents significant challenges in terms of feasibility, scalability, and cost-effectiveness for large-scale real-time deployment. Centralized solutions, in general, also suffer from substantial communication overhead, as the global channel state information (CSI) must be transmitted to the central node (CN) at each channel coherence time. This transmission may involve multi-hop communication if the CN is situated far from the network. Then, the CN has to perform joint optimization of all variables on a high-performance computational processor, followed by transmitting the optimized solution back to the base stations (BSs). Such a procedure, necessitated at the millisecond scale, is clearly impractical.
Distributed solutions hold great promise in addressing these challenges; however, their design is exceptionally complex \cite{scutari2013decomposition,palomar2006tutorial,scutari2016parallel}, particularly in the mmWave domain where the analog beamformers and combiners are shared among the users \cite{abrardo2019distributed}.

\subsection{Main Contributions}

%In what follows, to fill the aforementioned research gaps, we present two HYBF designs for a multicell mMIMO mmWave FD system with multi-antenna UL and DL users for weighted sum rate (WSR) maximization. Firstly, we aim at presenting a novel centralized HYBF (C-HYBF) scheme based on the minorization-maximization (MM) method \cite{stoica2004cyclic}, relying on alternating optimization. Remark that, although suffering from many inherent challenges at the implementation level, centralized solutions exhibit superior 
%performance and can serve as a benchmark to compare the performance of distributed methods \cite{sheemar2021game}. To overcome the challenges of C-HYBF, we for the first time ever, present the concept of parallel and distributed (P$\&$D)-HYBF for mmWave. Such a solution is very promising as it decomposes the global multi-cell FD WSR maximization problem into per-link independent and fully decoupled low-complexity optimization sub-problems. Thus, it can be implemented by each mmWave FD BS on very low-cost computational processors in parallel and independently, leading to an extremely efficient approach for HYBF. P$\&$D-HYBF is based on cooperation, which requires information sharing about the beamformers, but only among the neighbouring BSs. Moreover, in contrast to the C-HYBF, which requires full CSI to be transferred to the CN, P$\&$D-HYBF relies only on the local CSI.
In the subsequent sections, aiming to bridge the identified research gaps, we present two HYBF designs for a multicell mmWave FD system with multiple antenna UL and DL users, focusing on weighted sum rate (WSR) maximization. Initially, we introduce a novel centralized HYBF (C-HYBF) scheme, leveraging the minorization-maximization (MM) method based on alternating optimization \cite{stoica2004cyclic}. It is worth noting that while centralized solutions entail inherent challenges at the implementation level, they demonstrate superior performance and can serve as a benchmark for comparing the performance of distributed methods \cite{sheemar2021game}.

To overcome the challenges posed by C-HYBF, we introduce the concept of parallel and distributed (P$\&$D)-HYBF for mmWave systems, a pioneering approach that decomposes the global mmWave FD WSR maximization problem into independent and decoupled low-complexity optimization sub-problems on a per-link basis. This approach allows each mmWave FD base BS to implement the optimization independently and in parallel, utilizing cost-effective computational processors. As a result, P$\&$D-HYBF offers an exceptionally efficient solution for HYBF. The implementation of P$\&$D-HYBF relies on cooperation, involving information sharing of beamformers solely among neighboring BSs. Furthermore, unlike C-HYBF, which necessitates the transmission of full CSI to the CN, P$\&$D-HYBF relies solely on local CSI, thereby significantly reducing the overhead associated with CSI exchange.

The design of P$\&$D-HYBF in mmWave systems presents considerable challenges due to the shared utilization of analog beamformers and combiners by the DL and UL users within the same cell. This sharing introduces intricately coupled constraints that significantly complicate per-link decoupling and optimization. Furthermore, the DL beamformers are subject to a total sum-power constraint that further exacerbates the complexity of the system. Our objective is to address these challenges and establish a framework for P$\&$D-HYBF in mmWave systems by elucidating strategies to effectively manage the various coupling constraints within the hybrid beamforming architecture. The computational analysis demonstrates that P$\&$D-HYBF exhibits remarkably low complexity, which scales linearly with the size and density of the network. This stands in contrast to the centralized HYBF (C-HYBF) scheme, which exhibits a quadratic dependence on both network size and density, necessitating exceptionally high computational complexity per iteration. Simulation results validate that P$\&$D-HYBF achieves performance on par with C-HYBF while surpassing the conventional fully digital half-duplex (HD) system, even with a limited number of RF chains.

 \emph{Organization:} The rest of the paper is organized as follows: We first present the system model and problem formulation in Section \ref{system_model}, and the MM method is presented in Section \ref{simplificazione_problem}. Sections \ref{Centralized_HYBF} and \ref{Distributed_HYBF} present the C-HYBF scheme and the P$\&$D-HYBF designs, respectively. Finally, Sections \ref{simulazioni} and \ref{conclusioni} discuss the simulation results and conclusions, respectively.
 
 \emph{Mathematical Notations:}  Boldface lower and upper case case characters denote vectors and matrices, respectively. $\mathbb{E}\{\cdot\}, \mbox{Tr}\{\cdot\}, (\cdot)^H$, $\otimes$, $\bmI$, and $\bmD_{d}$
denote expectation, trace, conjugate transpose, Kronecker product, identity matrix, and the $d$ generalized dominant eigenvectors (GDEs) selection matrix, respectively. A vector of zeros of size $M$ is denoted as $\mathbf{0}_{M \times 1}$, $\mbox{vec}(\bmX)$ stacks the column of $\bmX$ into $\bmx$, $\mbox{unvec}(\bmx)$ reshapes $\bmx$ into $\bmX$, and $\angle \bmX$ returns the phasors of matrix $\bmX$. $\mbox{Cov}(\cdot)$ and diag$(\cdot)$ denote the covariance and diagonal matrices, respectively, and $svd(\bmX)$ returns the singular value decomposition (SVD) of $\bmX$. $\bmX(m,n)$ denotes the element at the m-th row and n-th column.

\section{System Model} \label{system_model}

\begin{figure}
     \centering
\includegraphics[width=7cm,height=4.5cm,draft=false]{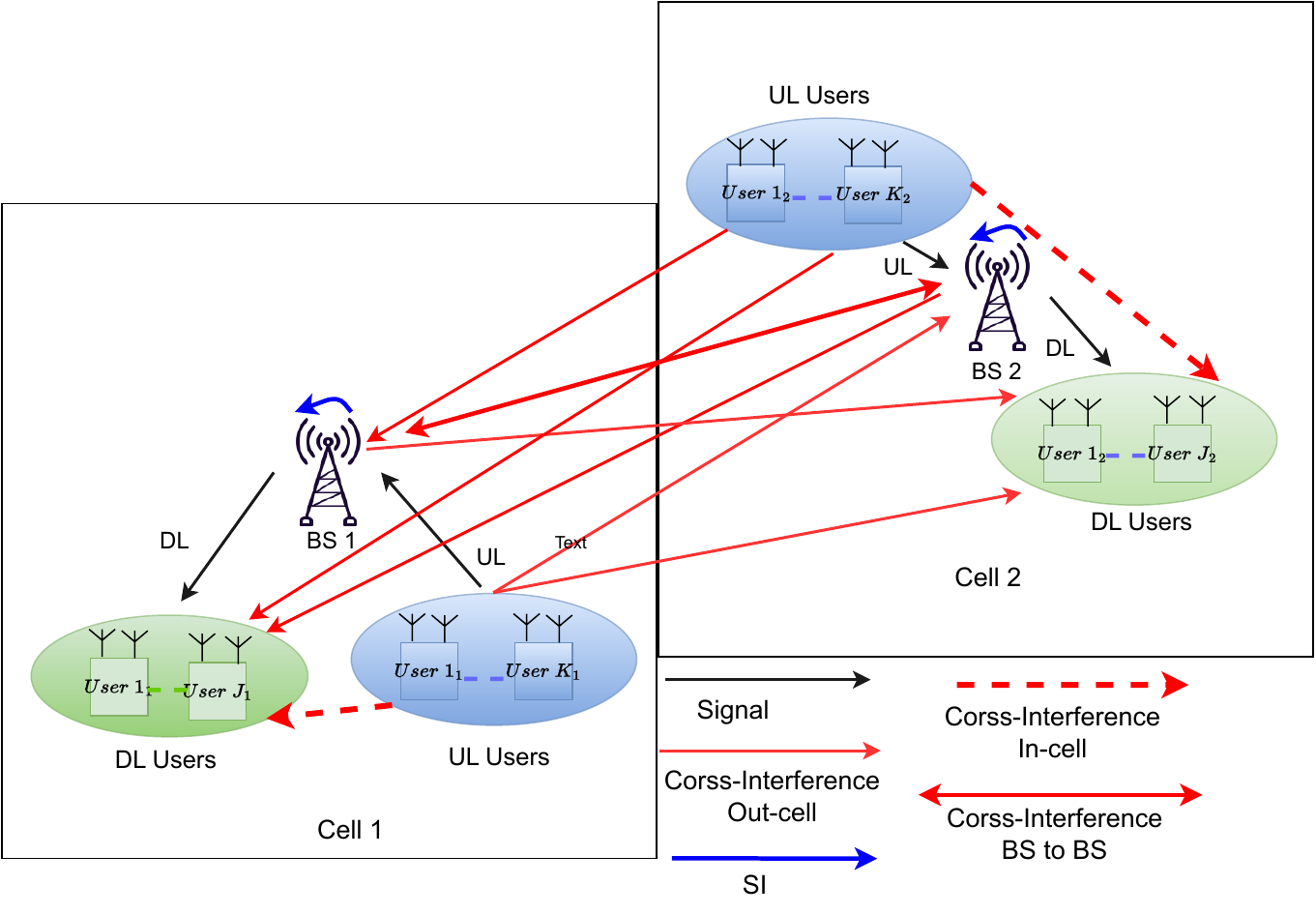}
     \caption{The multicell mmWave MIMO FD system.}
     \label{scenario_FD} \vspace{-5mm}
\end{figure}
Let $\mathcal{B} = \{1,....,B\}$ denote the set containing the indices of $B$ FD BSs serving in $B$ cells. Let $\mathcal{D}_b= \{1,...,D_b\}$ and $\mathcal{U}_b= \{1,...,U_b\}$ denote the sets containing the indices of $D_b$ DL and $U_b$ UL multi-antenna HD users communicating with BS $b \in \mathcal{B}$. The system setting with $B=2$ is shown in Fig. \ref{scenario_FD}. The DL user $j_b \in \mathcal{D}_b$ and UL user $k_b \in \mathcal{U}_b$ are assumed to have $N_{j_b}$ receive and $M_{k_b}$ transmit antennas, respectively. The FD BS $b \in \mathcal{B}$ is assumed to have $M_{b}^{RF}$ and $N_{b}^{RF}$ transmit and receive RF chains and $M_b$ and $N_b$ transmit and receive antennas, respectively. 
We denote with $\bmV_{j_b} \in \mathbb{C}^{M_{b}^{RF} \times d_{j_b}}$ and $\bmU_{k_b} \in \mathbb{C}^{M_{k_b} \times d_{k_b}}$ the digital beamformers for the white unitary variance data streams $\bms_{j_b} \in \mathbb{C}^{d_{j_b} \times 1}$ and $\bms_{k_b} \in \mathbb{C}^{d_{k_b} \times 1}$ transmitted for DL user $j_b \in \mathcal{D}_b$ and from UL user $k_b \in \mathcal{U}_b$, respectively. Let $\bmW_b \in \mathbb{C}^{M_{b} \times M_{b}^{RF}} $ and $\bmF_b \in \mathbb{C}^{N_{b}^{RF} \times N_{b}}$ denote the fully connected analog beamformer and analog combiner for FD BS $b \in \mathcal{B}$, respectively. The analog stage is assumed to be quantized and let $\mathcal{P}_b = \{1, \;e^{i 2 \pi/n_b},\; e^{i 4 \pi/n_b},...,e^{i 2 \pi n_b-1/n_b}\}$  denote the set of $n_b$ possible discrete values that the unit-modulus elements of $\bmW_b$ and $\bmF_b$ can assume. Let $\mathbb{Q}_b(\cdot)$ denote the quantizer function to quantize the infinite resolution elements of $\bmW_b$ and $\bmF_b$ such that $\mathbb{Q}_b(\angle \bmW_b(m,n))$ and $\mathbb{Q}_b(\angle \bmF_b(m,n) \in  \mathcal{P}_b, \forall m,n$.

We assume the users to be suffering from hardware distortions due to limited dynamic range (LDR), which is denoted as $\bmc_{k_b}$ and $\bme_{j_b}$ for the UL user $k_b \in \mathcal{U}_b$ and DL user $j_b \in \mathcal{D}_b$, respectively, and modelled with the LDR noise model \cite{cirik2015weighted}

 \begin{equation}  
    \bmc_{k_b} \sim \mathcal{CN}(\mathbf{0}_{M_{k_b} \times 1}, k_{k_b}\; \mbox{diag}( \bmU_{k_b} \bmU_{k_b}^H  )), 
\end{equation}
 \begin{equation}  
        \bme_{j_b} \sim \mathcal{CN}(\mathbf{0}_{N_{j_b} \times 1}, \beta_{j_b}\; \mbox{diag}( \mathbf{\Phi}_{j_b}) ),
\end{equation}

where $k_{k_b} \ll 1$, $\beta_{j_b} \ll 1,\mathbf{\Phi}_{j_b} =  \mbox{Cov}(\bmr_{j_b})$ and $\bmr_{j_b}$ denotes the undistorted received signal for DL user $j_b \in \mathcal{D}_b$. Let $\bmc_b$ and $\bme_b$ denote the transmit and receive LDR noise for FD BS $b \in \mathcal{B}$, respectively, modelled as 

\begin{equation} 
   \bmc_b \sim \mathcal{CN}(\mathbf{0}_{M_b \times 1}, k_b\; \mbox{diag}( \sum_{n_b \in \mathcal{D}_b} \bmW_b \bmV_{n_b}  \bmV_{n_b}^H \bmW_b^H )) ,
 \end{equation}
 \begin{equation} 
    \bme_b \sim \mathcal{CN}({\mathbf{0}_{N_b^{RF} \times 1} , \beta_b\; \mbox{diag}( \mathbf{\Phi}_b)} ),
\end{equation}
with $ k_b \ll 1 $, $\beta_b \ll 1, \mathbf{\Phi}_b = \mbox{Cov}(\bmr_{b})$ and $\bmr_{b}$ denotes the undistorted received signal by FD BS $b \in \mathcal{B}$ after the analog combiner $\bmF_b$. Let $\bmn_{b}$ and $\bmn_{j_b}$ denote the thermal noise for FD BS $b$ and DL user $j_b$, respectively,
modelled as

\begin{equation} 
    \bmn_{b} \sim \mathcal{CN}(\mathbf{0}_{N_{b} \times 1}, \sigma_{b}^2 \bmI), \quad \quad \bmn_{j_b} \sim \mathcal{CN}(\mathbf{0}_{N_{j_b} \times 1}, \sigma_{j_b}^2 \bmI), 
\end{equation} 
with $\sigma_{b}^2$ and $\sigma_{j_b}^2$ denoting the noise variances.
 
\subsection{Channel Modelling}
We assume perfect CSI, which can be achieved based on the compressed sensing-based techniques developed for mmWave, similar to \cite{kuai2019structured}.
Let $\bmH_{j_b} \in \mathbb{C}^{N_{j_b} \times M_{b}} $ and $\bmH_{k_b} \in \mathbb{C}^{N_b \times M_{k_b}}$ denote the channels between DL user $j_b \in \mathcal{D}_b$ and UL user $k_b \in \mathcal{U}_b$ and FD BS $b \in \mathcal{B}$, respectively.
Let $\bmH_{j_b,k_b} \in \mathbb{C}^{N_{j_b} \times M_{k_b}}$ and $\bmH_{j_b,k_c} \in \mathbb{C}^{N_{j_b} \times M_{k_c}}$ denote the in-cell UL cross-interference (CI) channel (generated from opposite transmission directions) between the DL user $j_b \in \mathcal{D}_b$ and UL user $k_b \in \mathcal{U}_b$ and the out-cell UL CI channel response between the DL user $j_b \in \mathcal{D}_b $ and UL user $k_c \in \mathcal{U}_c$, respectively, with $b \neq c$. Let $\bmH_{j_b,c} \in \mathbb{C}^{N_{j_b} \times M_{c}}$ and $\bmH_{b,k_c} \in \mathbb{C}^{N_{b} \times M_{k_c}}$ denote the interference channels responses from FD BS $c \in \mathcal{B}$ to DL user $j_b \in \mathcal{D}_b$ and from UL user $k_c \in \mathcal{U}_c$ to FD BS $b$, respectively, with $ c \neq b$. Let  $\bmH_{b,c} \in \mathbb{C}^{N_b \times M_c} $ and $\bmH_{b,b} \in \mathbb{C}^{N_b \times M_b}$ denote the DL CI channel response from FD BS $c \in \mathcal{B}$ to FD BS $b \in \mathcal{B}$, with $c \neq b$, and the SI channel response for FD BS $b \in \mathcal{B}$, respectively. In mmWave, channel $\bmH_{k_b}$ can be modelled as  
\begin{equation} \label{channel_Model}
\begin{aligned}
     \bmH_{k_b} = \sqrt{\frac{{1}}{{N_{k_b}}}} \sum_{n = 1}^{N_{k_b}^p} & \alpha_{k_b}^{n} \mathbf{a}_{r}^b\;(\phi_{k_b}^n ) \; \mathbf{a}_{t}^{k_b}(\theta_{k_b}^{n})^T,
\end{aligned}
\end{equation} 
where the scalars $N_{k_b}^p$ and $\alpha_{k_b}^{n} $ denote the number of paths and a complex Gaussian random variable with amplitudes and phases distributed according to the Rayleigh and uniform distribution, respectively. The vectors $\mathbf{a}_{r}^b(\phi_{k_b}^n)$ and  $\mathbf{a}_{t}^{k_b}(\theta_{k_b}^{n})^T$ denote the receive and transmit antenna array response for FD BS $b \in \mathcal{B}$ and UL user $k_b \in \mathcal{U}_b$, respectively, with the angle of arrival (AoA) $\phi_{k_b}^{n}$ and angle of departure (AoD) $\theta_{k_b}^{n}$, respectively.
The channel responses $\bmH_{j_b},\bmH_{j_b,k_b}, \bmH_{j_b,k_c},\bmH_{j_b,c}$ and $\bmH_{b,k_c} $ can be modelled similarly as $\eqref{channel_Model}$ and the SI channel $\bmH_{b,b} \in \mathbb{C}^{N_b \times M_b}$ can be modelled as \cite{sheemar2021practical_HYBF}
  
\begin{table}[t] \tiny
\centering   \caption{Main Notations}
    \resizebox{8cm}{!}{%   
    \begin{tabular}{|p{6mm}|p{60mm}|}
        \hline  %\multicolumn{2}{|c|}{} \\
       %\hline           
         \mbox{$\mathbf{U}_{k_b}$} & Digital beamformer for UL user $k_b$ \\ 
         \hline
         \mbox{$\mathbf{V}_{j_b}$} & Digital beamformer for DL user $j_b$ \\ 
          \hline
         \mbox{$\bmW_b$} & Analog beamformer for BS $b$ \\
          \hline
           \mbox{$\bmF_b$} & Analog combiner at BS b\\
          \hline
          \mbox{$\bmH_{j_b}$} & Direct channel for DL user $j_b$ \\
          \hline
           \mbox{$\bmH_{k_b}$} & Direct channel for UL user $k_b$ \\
          \hline
          \mbox{$\bmH_{j_b,k_b}$} & CI channel from $k_b$ to $j_b$ \\
          \hline
          \mbox{$\bmH_{j_b,k_c}$} & CI channel from $k_c$ to $j_b$ \\
           \hline
           \mbox{$\bmH_{j_b,c}$} & Interference channel from BS $c$ to $j_b$ \\
           \hline
           \mbox{$\bmH_{b,k_c}$} & Interference channel from  user $k_c$ to the BS $b$ \\
           \hline
           \mbox{$\bmH_{b,c}$} & BS-to-BS interference channel from BS $c$ to the BS $b$ \\
           \hline
           \mbox{$\bmH_{b,b}$} & SI channel for BS $b$ \\
           \hline
    \end{tabular}}     \vspace{-2mm}
     \label{table_parametri_definiti}
\end{table}
\begin{equation} \label{SI_Channel}
    \bmH_{b} = \sqrt{\frac{\kappa_b}{\kappa_b+1}} \bmH_{b}^L + \sqrt{\frac{1}{\kappa_b+1}} \bmH_{b}^R,
\end{equation}
where $\bmH_{b}^R$ denotes the reflected components which can be modelled as \eqref{channel_Model} and 
$\bmH_{b}^L$ denotes the line of sight (LoS) channel, given as
\begin{equation}
    \bmH_{b}^L(m,n) = \frac{\rho_b}{r_{m,n}} e^{-j 2 \pi \frac{r_{m,n}}{\lambda}}.
\end{equation}
The scalars $\kappa_b$, $\rho_b$, $r_{m,n}$ and $\lambda$ denote the Rician factor, the power normalization constant to assure  $\mathbb{E}(||\bmH_{b}^L(m,n)||_F^2)= M_b N_b$, the distance between $m$-th receive and $n$-th transmit antenna and the wavelength, respectively. The main notations described above are summarized in Table \ref{table_parametri_definiti}.

\subsection{Problem Formulation}

Let $\bmy_{j_b}$ and $\bmy_{k_b}$ denote the signals received by the DL user $j_b \in \mathcal{D}_b$ and by the FD BS $b \in \mathcal{B}$ from UL user $k_b \in \mathcal{U}_b$ after the analog combiner $\bmF_b$, respectively, given as
 
\begin{equation}
     \begin{aligned}
            \bmy_{j_b}  =  & \bmH_{j_b}   ( \hspace{-1mm}  \sum_{n_b \in \mathcal{D}_b} \hspace{-1mm} \bmW_b \bmV_{n_b} \bms_{n_b} \hspace{-1mm}+ \bmc_{b})  + \bme_{j_b} \hspace{-1mm} + \bmn_{j_b}   + \hspace{-1mm} \sum_{k_b \in \mathcal{U}_b} \hspace{-2mm} \bmH_{j_b,k_b} ( \bmU_{k_b}    \bms_{k_b} \\& + \bmc_{k_b} )  +  \sum_{\substack{c \in \mathcal{B}, c \neq b}}  \Big[\bmH_{j_b,c} (\sum_{n_c \in \mathcal{D}_c} \hspace{-2mm} \bmW_c  \bmV_{n_c} \bms_{n_c}  + \bmc_{c} )   \\&+ \sum_{k_c \in \mathcal{U}_c} \bmH_{j_b,k_c} (\bmU_{k_c} \bms_{k_c} + \bmc_{k_c}) \Big],
     \end{aligned} \label{DL_signal}
  \end{equation}
 \begin{equation}
     \begin{aligned}
       \bmy_{k_b} = & {\bmF_b}^H \Big[ \sum_{k_b \in \mathcal{U}_b} \bmH_{k_b} ( \bmU_{k_b} \bms_{k_b} + \bmc_{k_b}) + \bmn_b  + \bmH_{b,b} ( \sum_{j_b \in \mathcal{D}_b} \hspace{-2mm} \bmW_b   \\&  \bmV_{j_b} \bms_{j_b}  + \bmc_{b})   +   \sum_{\substack{c \in \mathcal{B}, c \neq b}} [\bmH_{b,c} ( \sum_{j_c \in \mathcal{D}_c} \bmW_c \bmV_{j_c} \bms_{j_c} + \bmc_{c} )  \\& +   \sum_{k_c \in \mathcal{U}_c} \bmH_{b,k_c} ( \bmU_{k_c} \bms_{k_c} + \bmc_{k_c} ) ] \Big] + \bme_b .
     \end{aligned}   \label{UL_signal}      
  \end{equation}

Let $\overline{k}_b$, $\overline{j}_b$ and $\overline{b}$ denote the indices in the sets $\mathcal{U}_b$, $\mathcal{D}_b$ and $\mathcal{B}$ without the elements $k_b$, $j_b$ and $b$, respectively. Let $\bmT_{k_b} \triangleq \bmU_{k_b} \bmU_{k_b}^H$ and $\bmQ_{j_b} \triangleq \bmW_b \bmV_{j_b} \bmV_{j_b}^H \bmW_b^H$ denote the transmit covariance matrices of UL user $k_b \in \mathcal{U}_b$ and of FD BS $b \in \mathcal{B}$ intended for its DL user $j_b \in \mathcal{D}_b$, respectively. Let ($\bmR_{k_b}$)  $\bmR_{\overline{k}_b}$ and ($\bmR_{j_b}$) $\bmR_{\overline{j}_b}$ denote the (signal plus) interference plus noise covariance matrices received
by the FD BS $b \in \mathcal{B}$ from UL user $k_b \in \mathcal{U}_b$ and by the DL user $j_b \in \mathcal{D}_b$, respectively. The matrices $\bmR_{k_b}$ and $\bmR_{j_b}$ can be written as \eqref{cov}  and $\bmR_{\overline{k}_b}$ and $\bmR_{\overline{j}_b}$ can be obtained as $\bmR_{\overline{k}_b} = \bmR_{k_b} - \bmH_{j_b} \bmQ_{j_b}\bmH_{j_b}^H$ and $\bmR_{\overline{j}_b} = \bmR_{j_b} - {\bmF_b}^H\bmH_{k_b} \bmT_{k_b} \bmH_{k_b}^H {\bmF_b}$, respectively.

\begin{figure*} \small
\begin{subequations} \label{cov}
     \begin{equation} 
     \begin{aligned}
              \bmR_{j_b}  = & \bmH_{j_b} \bmQ_{j_b}\bmH_{j_b}^H  +\bmH_{j_b} ( \hspace{-1mm}\sum_{\substack{n_b \in \mathcal{D}_b\\ n_b\neq j_b}} \hspace{-1mm}\bmQ_{n_b} ) \bmH_{j_b}^H +
            \bmH_{j_b} k_b \mbox{diag}(\sum_{n_b \in \mathcal{D}_b} \hspace{-1.3mm} \bmQ_{n_b} ) \bmH_{j_b}^H   +  \hspace{-1.3mm} \sum_{k_b \in \mathcal{U}_b} \hspace{-1.3mm} \bmH_{j_b,k_b}  ( \bmT_{k_b}  + k_{k_b} \mbox{diag}(\hspace{-0.5mm} \bmT_{k_b}\hspace{-0.5mm})) \bmH_{j_b,k_b}^H \hspace{-1mm}   + \hspace{-1mm} \sum_{\substack{c \in  \mathcal{B}\\c \neq b}}  \bmH_{j_b,c} \\ &(\hspace{-1mm}\sum_{n_c \in \mathcal{D}_c}  \bmQ_{n_c} + k_c \mbox{diag}( \bmQ_{n_c}) ) \bmH_{j_b,c}^H   +\sum_{\substack{c \in \mathcal{B}\\c \neq b}}  \sum_{k_c \in \mathcal{U}_c} \bmH_{j_b,k_c} (\bmT_{k_c} + k_{k_c}  \mbox{diag}(\bmT_{k_c}) )\bmH_{j_b,k_c}^H   + \beta_{j_b} \mbox{diag}( \Phi_{j_b})  + \sigma_{j_b}^2 \bmI_{N_{j_b}}, 
     \end{aligned} \label{Cov_DL} 
  \end{equation}
     \begin{equation}  
     \begin{aligned}  
       \bmR_{k_b} = &  {\bmF_b}^H (\bmH_{k_b} \bmT_{k_b} \bmH_{k_b}^H +\sum_{\substack{m_b \in \mathcal{U}_b \\m_b\neq k_b}}  \bmH_{m_b}  \bmT_{m_b} \bmH_{m_b}^H  + \sum_{m_b \in \mathcal{U}_b} k_{m_b} \bmH_{m_b}  \mbox{diag}(\bmT_{m_b}) \bmH_{m_b}^H  +  \bmH_{b,b} \sum_{j_b \in \mathcal{D}_b} ( \bmQ_{j_b}   + k_{b} \mbox{diag}(\bmQ_{j_b}))\bmH_{b,b}^H   \\& +  \sum_{\substack{c \in \mathcal{B}\\c \neq b}} \bmH_{b,c}  \sum_{j_c \in \mathcal{D}_c} (\bmQ_{j_c}  + k_c \mbox{diag}(\bmQ_{j_c} ) )\bmH_{b,c}^H     +  \sum_{\substack{c \in \mathcal{B}\\c \neq b}} \sum_{k_c \in \mathcal{U}_c} \bmH_{b,k_c} ( \bmT_{k_c}   + k_{k_c} \mbox{diag}(\bmT_{k_c}) )\bmH_{b,k_c}^H  + \sigma_{b}^2 \bmI_{n_b}  ) {\bmF_b} + \beta_b \mbox{diag}(\mathbf{\Phi}_{b}),  
     \end{aligned}   \label{Cov_BS}  
  \end{equation}  \vspace{-2mm}
 \end{subequations} \hrulefill
 \end{figure*}

%The matrices $\mathbf{\Phi}_{j_b}$ and $\mathbf{\Phi}_{b}$ for the receive LDR noise in \eqref{LDR_1} and \eqref{LDR_2} can be recovered from \eqref{Cov_DL} and \eqref{Cov_BS} with
%$\beta_{j_b} = 0$ and $\beta_{b}=0$, respectively.
 
 %\begin{figure*}

 %\end{figure*}

The WSR maximization problem for HYBF in the considered scenario with  $J_b$ DL and $U_b$ UL multi-antenna users $\forall b \in \mathcal{B}$, under the joint sum-power, unit-modulus and discrete phase-shifters constraints can be stated as
 
\begin{subequations} \label{WSR}
\begin{equation}   
\begin{aligned}
      \underset{\substack{\bmU,\bmV\\ \bmW_b, \bmF_b}}{\max}  \hspace{-0.5mm} \sum_{b \in \mathcal{B}} \hspace{-0.5mm}   \Big[ \hspace{-1.5mm}  \sum_{k_b \in \mathcal{U}_b} \hspace{-2mm}  w_{k_b} \mbox{ln det}( \bmR_{\overline{k}_b}^{-1} \bmR_{k_b} ) \hspace{-0.5mm} +  \hspace{-2mm} \sum_{j_b \in \mathcal{D}_b}    \hspace{-1.5mm} w_{j_b} \mbox{ln det}( \bmR_{\overline{j}_b}^{-1} \bmR_{j_b}) \Big] 
\end{aligned}
\end{equation}
\begin{equation}
\text{s.t.} \quad  \mbox{Tr}(\bmU_{k_b} \bmU_{k_b}^H ) \leq p_{k_b},  \quad  \forall k_b \in \mathcal{U}_b,   \end{equation}
\begin{equation}
  \mbox{Tr}(\sum_{j \in \mathcal{D}_b} \bmW_b \bmV_j  \bmV_j^H \bmW_b^H ) \leq  p_b, \quad \forall b \in \mathcal{B}, \label{c2}
\end{equation}  
\begin{equation} 
 \bmW_b(m,n) \;\&\; \bmF_b (m,n) \in  \mathcal{P}_b,  \quad  \;  \forall m,n. \label{c3}
\end{equation}
\end{subequations}
The scalar $w_{i}$ denotes the rate weight for user $i$ and $p_b$ denotes the sum-power constraint for BS $b \in \mathcal{B}$, $\bmU$ and $\bmV$ denote the collection of  digital beamformers in UL and DL, respectively, and $\bmW_b$ and $\bmF_b$ denote the  collections of analog beamformers and combiners, respectively.
%\emph{Remark 1:} The WSR achieved with \eqref{WSR} is not affected with the minimum-mean-squared-error (MMSE) combiners $(4)-(9)$ \cite{christensen2008weighted}. Therefore, they can be omitted during the optimization process and can be chosen as the MMSE receivers after solving \eqref{WSR}. The number of digital combiners would be equal to the number of total users in the multicell FD network. By omitting them, the HYBF design simplifies and the per-iteration computational complexity reduces significantly. Moreover, it will significantly reduce the amount of data exchange required among the FD BSs by the P$\&$D-HYBF.

\section{Minorization-Maximization}\label{simplificazione_problem}
Finding the global optimum of problem \eqref{WSR} is extremely challenging as it is non-concave in the transmit covariance matrices $\bmT_{k_b}$ and $\bmQ_{j_b}$ due to interference. To find its sub-optimal solution based on alternating optimization, we leverage the MM method \cite{stoica2004cyclic}, which allows reformulating \eqref{WSR} with its minorizer using the difference-of-convex (DC) functions \cite{stoica2004cyclic}.

%The optimization problem represented by equation \eqref{WSR} poses significant challenges in achieving global optimality due to the non-concave nature arising from interference terms within the transmit covariance matrices, $\bmT_{k_b}$ and $\bmQ_{j_b}$. To address this, we adopt an approach based on alternating optimization, aiming to obtain a sub-optimal solution. In this pursuit, we employ the minorization-maximization (MM) method \cite{stoica2004cyclic}, which enables us to reformulate equation (WSR) with its minorizer through the utilization of difference-of-convex (DC) functions \cite{stoica2004cyclic}.

Let $\mbox{WR}_{k_b}^{UL}$ and  $\mbox{WR}_{j_b}^{DL}$ denote the weighted rate (WR) of users $k_b \in \mathcal{U}_b$ and $j_b \in \mathcal{D}_b$, respectively, and let $\mbox{WSR}_{\overline{k}_b}^{UL}$  and $\mbox{WSR}_{\overline{j}_b}^{DL}$ denote the WSR of users in UL and DL outside the cell $b$, respectively. The dependence of the global WSR in \eqref{WSR} on the aforementioned terms can
be highlighted as   
 
 \begin{equation}  
 \begin{aligned}
           \mbox{WSR}\hspace{-0.5mm} = \mbox{WR}_{k_b}^{UL} \hspace{-0.5mm}+ \hspace{-0.5mm}\mbox{WSR}_{\overline{k}_b}^{UL}+ \hspace{-0.5mm} \mbox{WR}_{j_b}^{DL}\hspace{-0.5mm}+\hspace{-0.5mm}\mbox{WSR}_{\overline{j}_b}^{DL}\hspace{-0.5mm} + \hspace{-0.5mm}\mbox{WSR}_{\overline{b}}^{UL} \hspace{-0.4mm}+ \hspace{-0.5mm}\mbox{WSR}_{\overline{b}}^{DL}
 \end{aligned}
\label{eqWSR}
\end{equation}
in which the WSR in UL and DL for FD BS $b \in \mathcal{B}$ is given as $\mbox{WSR}_b^{UL} =\mbox{WR}_{k_b}^{UL} \hspace{-0.3mm}+ \hspace{-0.4mm}\mbox{WSR}_{\overline{k}_b}^{UL}$ and $\mbox{WSR}_{b}^{DL} = \mbox{WR}_{j_b}^{DL} + \mbox{WSR}_{\overline{j}_b}^{DL}$, respectively. Considering the dependence of the transmit covariance matrices on the global WSR, only $\mbox{WR}_{k_b}^{UL}$ is concave in $\bmT_{k_b}$ and $\mbox{WSR}_{\overline{k}_b}^{UL}$,
$\mbox{WSR}_{b}^{DL}$, $\mbox{WSR}_{\overline{b}}^{UL}$
and $\mbox{WSR}_{\overline{b}}^{DL}$ are non concave in $\bmT_{k_b}$ due to interference. Similarly, only $\mbox{WR}_{j_b}^{DL}$ is concave in $\bmQ_{j_b}$ and $\mbox{WSR}_{\overline{j}_b}^{DL}$, $\mbox{WSR}_{b}^{UL}$,$\mbox{WSR}_{\overline{b}}^{UL}$,$\mbox{WSR}_{\overline{b}}^{DL}$ are non concave in $\bmQ_{j_b}$. As a linear function is simultaneously convex and concave, DC functions introduce the first order
Taylor series expansion of $\mbox{WSR}_{\overline{k}_b}^{UL}$,
$\mbox{WSR}_{b}^{DL}$, $\mbox{WSR}_{\overline{b}}^{UL}$
and $\mbox{WSR}_{\overline{b}}^{DL}$ in $\bmT_{k_b}$, around $\hat{\bmT}_{k_b}$ (i.e. around all $\bmT_{k_b}$), and for $\mbox{WSR}_{\overline{j}_b}^{DL}$, $\mbox{WSR}_{b}^{UL}$,$\mbox{WSR}_{\overline{b}}^{UL}$,$\mbox{WSR}_{\overline{b}}^{DL}$ around $\hat{\bmQ}_{j_b}$ (i.e. around all $\bmQ_{j_b}$).
Let $\hat{\bmT}$ and $\hat{\bmQ}$ denote the 
sets of all such $\hat{\bmT}_{k_b}$ and $\hat{\bmQ}_{j_b}$, respectively. The tangent expressions for the non-concave terms for $\bmT_{k_b}$ can be written by computing the gradients

 \begin{subequations} \label{grad_kb}
\begin{equation} 
    \hat{\bmG}_{\overline{k}_b,b}^{UL} = - \frac{\partial \mbox{WSR}_{\overline{k}_b}^{UL}}{\partial \bmT_{k_b}}|_{\hat{\mathbf{T}},\hat{\mathbf{Q}}}
    ,\; \hat{\bmG}_{{k_b},b}^{DL}  = - \frac{\partial \mbox{WSR}_b^{DL}}{\partial \bmT_{k_b}}|_{\hat{\mathbf{T}}, \hat{\mathbf{Q}}}, 
     \end{equation}
    \begin{equation}
    \hat{\bmG}_{k_b,\overline{b}}^{UL} = - \frac{\partial \mbox{WSR}_{\overline{b}}^{UL}}{\partial \bmT_{k_b}} 
|_{\hat{\mathbf{T}},\hat{\mathbf{Q}}}, \; \hat{\bmG}_{{k_b},\overline{b}}^{DL}  = - \frac{\partial \mbox{WSR}_{\overline{b}}^{DL}}{\partial \bmT_{k_b}}|_{\hat{\mathbf{T}},\hat{\mathbf{Q}}}, 
\end{equation}
  \end{subequations}
which allow to write the minorizers, denoted as $\underline{\mbox{WSR}}_{\overline{k}_b}^{UL},$ $\underline{\mbox{WSR}}_b^{DL},$ $\underline{\mbox{WSR}}_{\overline{b}}^{UL}$ and $\underline{\mbox{WSR}}_{\overline{b}}^{DL}$ with respect to $\bmT_{k_b}$. Similarly, for the transmit covariance matrix $\bmQ_{j_b}$, we have the gradients

 \begin{subequations} \label{grad_jb}
\begin{equation} 
    \hat{\bmG}_{j_b,b}^{UL} = - \frac{\partial \mbox{WSR}_b^{UL}}{\partial \bmQ_{j_b}}|_{\hat{\mathbf{T}},\hat{\mathbf{Q}}}
    , \;\hat{\bmG}_{\overline{j}_b,b}^{DL}  = - \frac{\partial \mbox{WSR}_{\overline{j}_b}^{DL}}{\partial \bmQ_{j_b}}|_{\hat{\mathbf{T}},\hat{\mathbf{Q}}},
    \end{equation}
    \begin{equation}
    \hat{\bmG}_{j_b,\overline{b}}^{UL} = - \frac{\partial \mbox{WSR}_{\overline{b}}^{UL}}{\partial \bmQ_{j_b}}|_{\hat{\mathbf{T}},\hat{\mathbf{Q}}}
    ,  \;\hat{\bmG}_{{j_b},\overline{b}}^{DL}  = - \frac{\partial \mbox{WSR}_{\overline{b}}^{DL}}{\partial \bmQ_{j_b}}|_{\hat{\mathbf{T}},\hat{\mathbf{Q}}}, 
\end{equation}
  \end{subequations}  
which allow to write the minorizers $\underline{\mbox{WSR}}_b^{UL},$ $\underline{\mbox{WSR}}_{\overline{j}_b}^{DL},$ $\underline{\mbox{WSR}}_{\overline{b}}^{UL} $ and $\underline{\mbox{WSR}}_{\overline{b}}^{DL}$ with respect to $\bmQ_{j_b}$. The gradients \eqref{grad_kb} and \eqref{grad_jb} can be computed by applying the matrix differentiation properties and they are reported in Table \ref{table_gradients}.

%Hence, the DC functions constitute a MM approach, regardless of the restatement of the transmit covariance matrices $\bmT_{k_b}$ and $\bmQ_{j_b}$ as a function of the beamformers. 

\begin{table*}  
\centering
    \caption{Gradients expressions to construct the minorized WSR cost function.}
    \resizebox{18cm}{!}{%
    \begin{tabular}{|p{9mm}|p{155mm}|}
       \hline
      $\hat{\bmG}_{\overline{k}_b,b}^{UL}$  & $\sum_{\substack{m_b \in \mathcal{U}_b\\m_b \neq k_b}} w_{m_b} [\bmH_{m_b}^H \bmF_b (\bmR_{\overline{m}_b}^{-1}  - \bmR_{m_b}^{-1}  + \beta_{b} \mbox{diag}(\bmR_{\overline{m}_b}^{-1} - \bmR_{m_b}^{-1} ) ) {\bmF_b}^H \bmH_{m_b} +  k_{m_b} \mbox{diag}(\bmH_{m_b}^H \bmF_b (\bmR_{\overline{m}_b}^{-1} - \bmR_{m_b}^{-1}) {\bmF_b}^H \bmH_{m_b}) ].$  \\ \hline
       $\hat{\bmG}_{{k_b},b}^{DL}$ & $\sum_{\substack{j_b \in \mathcal{D}_b}} w_{j_b} [\bmH_{j_b,k_b}^H (\bmR_{\overline{j}_b}^{-1}  - \bmR_{j_b}^{-1}  + \beta_{j_b} \mbox{diag}(\bmR_{\overline{j}_b}^{-1} -  \bmR_{j_b}^{-1} ) ) \bmH_{j_b,k_c} +  k_{k_b} \mbox{diag}(\bmH_{j_b,k_b}^H (\bmR_{\overline{j}_b}^{-1} - \bmR_{j_b}^{-1}) \bmH_{j_b,k_b}) ]$  \\ \hline
      $\hat{\bmG}_{{k_b},\overline{b}}^{UL}$ & $ \sum_{\substack{c \in \mathcal{B}\\c\neq b}} \sum_{\substack{k_c \in \mathcal{U}_c}} w_{k_c} [\bmH_{c,k_b}^H {\bmF_c} (\bmR_{\overline{k}_c}^{-1}  - \bmR_{k_c}^{-1}  + \beta_{c} \mbox{diag}(\bmR_{\overline{k}_c}^{-1} -  \bmR_{k_c}^{-1} ) ) {\bmF_c}^H \bmH_{c,k_b} +  k_{k_b} \mbox{diag}(\bmH_{c,k_b}^H \bmF_c (\bmR_{\overline{k}_c}^{-1} - \bmR_{k_c}^{-1}) {\bmF_c}^H \bmH_{c,k_b}) ]$  \\ \hline
       $\hat{\bmG}_{{k_b},\overline{b}}^{DL}$ & $\sum_{\substack{c \in \mathcal{B}\\c\neq b}} \sum_{\substack{j_c \in \mathcal{D}_c}} w_{j_c} [\bmH_{j_c,k_b}^H  (\bmR_{\overline{j}_c}^{-1} - \bmR_{j_c}^{-1}  + \beta_{j_c} \mbox{diag}(\bmR_{\overline{j}_c}^{-1} -  \bmR_{j_c}^{-1} ) )  \bmH_{c,k_b} +  k_{k_b} \mbox{diag}(\bmH_{c,k_b}^H  (\bmR_{\overline{j}_c}^{-1} - \bmR_{j_c}^{-1})   \bmH_{c,k_b}) ]$ \\ \hline
      $ \hat{\bmG}_{{j_b},b}^{UL}$ & $\sum_{\substack{k_b \in \mathcal{U}_b}} w_{k_b} [\bmH_{b,b}^H \bmF_b(\bmR_{\overline{k}_b}^{-1}  - \bmR_{k_b}^{-1}  + \beta_{b} \mbox{diag}(\bmR_{\overline{k}_b}^{-1} -  \bmR_{k_b}^{-1} ) ) {\bmF_b}^H \bmH_{b,b} +  k_{k_b} \mbox{diag}(\bmH_{b,b}^H  {\bmF_b} (\bmR_{\overline{k}_b}^{-1} - \bmR_{k_b}^{-1})  {\bmF_b}^H \bmH_{b,b} ) ]$ \\ \hline
      $\hat{\bmG}_{\overline{j}_b,b}^{DL}$ & $ \sum_{\substack{l_b \in \mathcal{D}_b\\ l_b \neq j_b}} w_{l_b} [ \bmH_{l_b}^H (\bmR_{\overline{l}_b}^{-1}  - \bmR_{l_b}^{-1} + \beta_{l_b} \mbox{diag}(\bmR_{\overline{l}_b}^{-1} -  \bmR_{l_b}^{-1} )  ) \bmH_{l_b}^H + k_c \mbox{diag}(\bmH_{l_b}^H (\bmR_{\overline{l}_b}^{-1}  - \bmR_{l_b}^{-1}) \bmH_{l_b}) ]$ \\ \hline
      $\hat{\bmG}_{j_b,\overline{b}}^{UL}$ & $\sum_{\substack{c \in \mathcal{B}\\c \neq b}} \sum_{\substack{k_c \in \mathcal{U}_c}} w_{k_c} [ \bmH_{c,b}^{H} {\bmF_c} (\bmR_{\overline{k}_c}^{-1} - \bmR_{k_c}^{-1} + \beta_c \mbox{diag} (\bmR_{\overline{k}_c}^{-1} - \bmR_{k_c}^{-1} )) {\bmF_c}^H \bmH_{c,b} + k_b \mbox{diag}(\bmH_{c,b}^{H} {\bmF_c} (\bmR_{\overline{k}_c}^{-1} - \bmR_{k_c}^{-1}) {\bmF_c}^H  \bmH_{c,b} ) ]$ \\ \hline
      $\hat{\bmG}_{j_b,\overline{b}}^{DL}$ & $\sum_{\substack{c \in \mathcal{B}\\c \neq b}} \sum_{\substack{j_c \in \mathcal{D}_c}} w_{j_c} [ \bmH_{j_c,b}^{H} (\bmR_{\overline{j}_c}^{-1} - \bmR_{j_c}^{-1} + \beta_{j_c} \mbox{diag} (\bmR_{\overline{j}_c}^{-1} - \bmR_{j_c}^{-1} )) \bmH_{j_c,b} + k_b \mbox{diag}(\bmH_{j_c,b}^{H} (\bmR_{\overline{j}_c}^{-1} - \bmR_{j_c}^{-1})\bmH_{j_c,b}  ]$  \\ \hline
    \end{tabular}} \label{table_gradients} 
\end{table*}

Let $\lambda_{k_b}$ and $\psi_b$ denote the Lagrange multipliers associated with the sum-power constraint for UL user $k_b \in \mathcal{U}_b$ and FD BS $b \in \mathcal{B}$, respectively. For notational convenience, let 
\begin{subequations} \label{sigma_local_var}
\begin{equation}
 \mathbf{\Sigma}_{k_b}^{1} =  \bmH_{k_b}^H \bmF_b \bmR_{\overline{k}_b}^{-1} {\bmF_b}^H \bmH_{k_b}, \quad \mathbf{\Sigma}_{j_b}^{1} =   \bmH_{j_b}^H \bmR_{\overline{j}_b}^{-1} \bmH_{j_b},
  \end{equation}
  \begin{equation}
       \mathbf{\Sigma}_{k_b}^{2}  = \hat{\bmG}_{\overline{k}_b,b}^{UL}+ \hat{\bmG}_{{k_b},b}^{DL}+\hat{\bmG}_{k_b,\overline{b}}^{UL} +\hat{\bmG}_{k_b,\overline{b}}^{DL} + \lambda_{k_b} \bmI,  
\end{equation}\label{diagonal_matrices}
\begin{equation}
    \mathbf{\Sigma}_{j_b}^{2} = \hat{\bmG}_{j_b,b}^{UL}+ \hat{\bmG}_{\overline{j}_b,b}^{DL} +\hat{\bmG}_{j_b,\overline{b}}^{UL}+\hat{\bmG}_{j_b,\overline{b}}^{DL} + \psi_b \bmI.
\end{equation}
\end{subequations} 
We remark that the effect of the LDR noise is captured in the gradients. By considering the minorized WSR constructed with the gradients \eqref{grad_kb}-\eqref{grad_jb}, ignoring the constant terms and the unit-modulus and quantization constraints \eqref{c3}, and augmenting it with the constraints leads to the Lagrangian  
 
\begin{equation}\label{Largrangian_min}
\begin{aligned}
          \mathcal{L} = &\hspace{-1mm}\sum_{b \in \mathcal{B}}   \hspace{-1mm} \big[ \sum_{k_b \in \mathcal{U}_b} (w_{k_b} \mbox{ln det}( \bmI + \bmU_{k_b}^H \mathbf{\Sigma}_{k_b}^{1} \bmU_{k_b})
        - \mbox{Tr}(\bmU_{k_b}^H \mathbf{\Sigma}_{k_b}^{2} \bmU_{k_b})   \\& + \lambda_{k_b} p_{k_b}  )
           +\hspace{-1mm} \sum_{j_b \in \mathcal{D}_b} ( w_{j_b} \mbox{ln det}( \bmI +  \bmV_{j_b}^H {\bmW_b}^H \mathbf{\Sigma}_{j_b}^{1} {\bmW_b} \bmV_{j_b})   \\& -\mbox{Tr}(\bmV_{j_b}^H {\bmW_b}^H  \mathbf{\Sigma}_{j_b}^{2} {\bmW_b} \bmV_{j_b}) +  \psi_b p_b)   \big].
\end{aligned}
\end{equation}

\emph{Remark 1:} The tangent expressions constitute a touching lower bound, and the original WSR and its minorized version
have the same Karush–Kuhn–Tucker (KKT) conditions. Hence any (sub) optimal solution for minorized WSR is also (sub)
optimal for the original WSR

%Let $\lambda_{k_b}$ and $\psi_b$ denote the Lagrange multipliers associated with the sum-power constraint for UL user $k_b \in \mathcal{U}_b$ and BS $b \in \mathcal{B}$, respectively. Augmenting the WSR function \eqref{WSR_convex} with the sum-power constraints yield the Lagrangian \eqref{Largrangian}. Note that \eqref{Largrangian}
%does not consider the quantization constraints on the analog beamformers and combiners, which will be incorporated later.

\section{Centralized Hybrid Beamforming}\label{Centralized_HYBF}
This section presents a novel C-HYBF design based on alternating optimization to solve the WSR maximization problem to a local optimum. Hereafter,
while optimizing one variable, we assume the remaining ones to be fixed and their information to be summarized in the gradients, which are updated at each iteration.

\subsection{Digital Beamforming}
To optimize the digital beamformers
$\bmU_{k_b}$ and $\bmV_{j_b}$ we take the derivatives of \eqref{Largrangian_min}, which leads to the KKT conditions

\begin{subequations}
\begin{equation}
\begin{aligned}
    & \mathbf{\Sigma}_{k_b}^{1} \bmU_{k_b} ( \bmI + \bmU_{k_b}^H \mathbf{\Sigma}_{k_b}^{1} \bmU_{k_b} )^{-1}   - \mathbf{\Sigma}_{k_b}^{2}  \bmU_{k_b} = 0, 
\end{aligned} \label{kkt_uplink} 
\end{equation}   
\begin{equation}
\begin{aligned}
{\bmW_b}^H \mathbf{\Sigma}_{j_b}^{1} {\bmW_b}   \bmV_{j_b} ( \bmI + &  \bmV_{j_b}^H {\bmW_b}^H \mathbf{\Sigma}_{j_b}^{1} {\bmW_b}  \bmV_{j_b} )^{-1}  \\& - {\bmW_b}^H \mathbf{\Sigma}_{j_b}^{2} {\bmW_b}  \bmV_{j_b}= 0.
\end{aligned} \label{kkt_downlink} 
\end{equation}
\end{subequations}
%Given the structure of the KKT conditions \eqref{kkt_uplink}-\eqref{kkt_downlink}, the digital beamformers can be optimized based on the result provided in the following Theorem.
 
\begin{thm}\label{digital_BF_solution}
The optimal digital beamformers $\bmU_{k_b}$ and $\bmV_{j_b}$ for \eqref{Largrangian_min}, given the remaining variables fixed, can be computed as the GDE solution of the pair of the following matrices 
\begin{subequations}
\begin{equation} \label{digital_solution_DL}
    \bmU_{k_b} = \bmD_{d_{k_b}}( \mathbf{\Sigma}_{k_b}^{1}, \mathbf{\Sigma}_{k_b}^{2}  ),
\end{equation}
\begin{equation}   \label{digital_solution_UL}
\bmV_{j_b} = \bmD_{d_{j_b}}( {\bmW_b}^H \mathbf{\Sigma}_{j_b}^{1} {\bmW_b} , {\bmW_b}^H \mathbf{\Sigma}_{j_b}^{2} {\bmW_b}  ),
\end{equation}
\end{subequations}
where the matrix $\bmD_{d_{k_b}}$ ($\bmD_{d_{k_b}}$) selects $d_{k_b} (d_{k_b})$ GDEs.
\end{thm}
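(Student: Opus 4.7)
My plan is to treat the UL beamformer $\bmU_{k_b}$ first, since the DL case then follows by viewing $\bmW_b$ as a fixed pre-multiplier. The starting point is the Lagrangian \eqref{Largrangian} restricted to $\bmU_{k_b}$, namely
\[
\mathcal{L}_{k_b}(\bmU_{k_b})=w_{k_b}\,\mbox{lndet}\!\bigl(\bmI+\bmU_{k_b}^H\mathbf{\Sigma}_{k_b}^{1}\bmU_{k_b}\bigr)-\mbox{Tr}\!\bigl(\bmU_{k_b}^H\mathbf{\Sigma}_{k_b}^{2}\bmU_{k_b}\bigr),
\]
which, after absorbing the power constraint into $\mathbf{\Sigma}_{k_b}^{2}$ via $\lambda_{k_b}$, is a concave function of $\bmU_{k_b}$ provided $\mathbf{\Sigma}_{k_b}^{2}\succ 0$ (a mild condition on $\lambda_{k_b}$ that is part of the KKT system). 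Concavity guarantees that the KKT condition \eqref{kkt_uplink} fully characterizes the maximizer, so I only need to exhibit a solution in the claimed GDE form and argue it is optimal rather than merely stationary.

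The key step is a simultaneous diagonalization of the pair $(\mathbf{\Sigma}_{k_b}^{1},\mathbf{\Sigma}_{k_b}^{2})$. I would factor $\mathbf{\Sigma}_{k_b}^{2}=\bmL\bmL^H$ (Cholesky), compute the EVD $\bmL^{-1}\mathbf{\Sigma}_{k_b}^{1}\bmL^{-H}=\bmP\mathbf{\Lambda}\bmP^H$ with eigenvalues $\lambda_1\ge\lambda_2\ge\cdots$, and introduce the change of variables $\tilde{\bmU}_{k_b}=\bmP^H\bmL^H\bmU_{k_b}$. Under this transformation the Lagrangian reduces to the decoupled form
\[
w_{k_b}\,\mbox{lndet}\!\bigl(\bmI+\tilde{\bmU}_{k_b}^H\mathbf{\Lambda}\tilde{\bmU}_{k_b}\bigr)-\mbox{Tr}\!\bigl(\tilde{\bmU}_{k_b}^H\tilde{\bmU}_{k_b}\bigr),
\]
for which a Hadamard/mode-selection argument shows that the $d_{k_b}$ active columns of $\tilde{\bmU}_{k_b}$ must align with the $d_{k_b}$ canonical basis vectors tied to the largest generalized eigenvalues $\lambda_1,\ldots,\lambda_{d_{k_b}}$ (the remaining columns are irrelevant since the log-det gain is zero while the trace penalty is positive). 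Undoing the change of variables gives $\bmU_{k_b}=\bmL^{-H}\bmP\,\bmD_{d_{k_b}}$, which is precisely the matrix of $d_{k_b}$ generalized dominant eigenvectors of $(\mathbf{\Sigma}_{k_b}^{1},\mathbf{\Sigma}_{k_b}^{2})$, i.e.\ \eqref{digital_solution_DL}. To close the loop, I would substitute this $\bmU_{k_b}$ into the KKT condition \eqref{kkt_uplink} and verify that the generalized eigenrelation $\mathbf{\Sigma}_{k_b}^{1}\bmU_{k_b}=\mathbf{\Sigma}_{k_b}^{2}\bmU_{k_b}\mathbf{\Lambda}_d$ together with the diagonal identity $\mathbf{\Lambda}_d(\bmI+\mathbf{\Lambda}_d)^{-1}=\bmI-(\bmI+\mathbf{\Lambda}_d)^{-1}$ makes both sides agree once the stream-power normalization is folded into $\lambda_{k_b}$.

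The DL case \eqref{digital_solution_UL} follows the same blueprint after noting that in \eqref{kkt_downlink} the analog beamformer $\bmW_b$ appears only through the sandwiched matrices $\bmW_b^H\mathbf{\Sigma}_{j_b}^{1}\bmW_b$ and $\bmW_b^H\mathbf{\Sigma}_{j_b}^{2}\bmW_b$; these play exactly the roles of $\mathbf{\Sigma}_{k_b}^{1}$ and $\mathbf{\Sigma}_{k_b}^{2}$ in the UL analysis, so the GDE characterization transfers verbatim. The main obstacle I anticipate is handling the nonlinear factor $(\bmI+\bmU^H\mathbf{\Sigma}^{1}\bmU)^{-1}$ in the KKT equation, which a priori obstructs a plain eigen-interpretation; the simultaneous-diagonalization trick dissolves this obstacle because in the decoupled coordinates the factor becomes a diagonal multiplier that commutes with everything, making the equivalence with the GDE problem transparent. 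A secondary subtlety is the positive definiteness of $\mathbf{\Sigma}_{k_b}^{2}$ (and of $\bmW_b^H\mathbf{\Sigma}_{j_b}^{2}\bmW_b$), which I would justify by choosing $\lambda_{k_b}$ (resp.\ $\psi_b$) large enough to dominate any indefinite part coming from the gradient sum, consistent with the Lagrange-multiplier search that accompanies the alternating-optimization iterations.
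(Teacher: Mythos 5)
Your proposal is correct and follows essentially the same route as the paper's Appendix A: Cholesky-factor the quadratic-penalty matrix $\mathbf{\Sigma}^{2}$ (resp.\ ${\bmW_b}^H\mathbf{\Sigma}_{j_b}^{2}\bmW_b$), whiten, eigendecompose the transformed $\mathbf{\Sigma}^{1}$, invoke Hadamard's inequality to force a diagonal solution, and undo the change of variables to read off the generalized dominant eigenvectors. The only cosmetic differences are that you treat UL before DL and add an (inessential, and in fact not quite accurate, since lndet is concave in $\bmU\bmU^H$ rather than in $\bmU$ itself) concavity remark; the load-bearing simultaneous-diagonalization argument is identical.
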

\begin{proof}
The proof is provided in Appendix \ref{AP.1}.
\end{proof}
It is imperative to acknowledge that GDEs furnish optimized beamforming directions while leaving the power aspect unaddressed. In our approach, we propose the scaling of beamformers to possess unit-norm columns, thereby facilitating the inclusion of optimal power allocation. It is important to highlight that this scaling operation does not compromise the optimality of the beamforming directions.

%We remark that  problem is stated w.r.t to the non unit-norm beamformers, which can be equivalently stated with respect to the unit-norm beamformers and power matrices, included later in Section \ref{pow_alloc}.

\subsection{Analog Beamforming}

Consider first the optimization of unconstrained analog beamformer $\bmW_b, 
 \forall b \in \mathcal{B}$, without \eqref{c3}, for which the optimization problem can be written as

\begin{equation} \label{analog_BF_resated}
\begin{aligned}
  \underset{\substack{\bmW_b}}{\max}   \sum_{j_b \in \mathcal{D}_b}  & [w_{j_b} \mbox{ln det}(\bmI + \bmV_{j_b}^H {\bmW_b}^H \mathbf{\Sigma}_{j_b}^{1} \bmW_b \bmV_{j_b} )  \\& -\mbox{Tr}(\bmV_{j_b}^H {\bmW_b}^H \mathbf{\Sigma}_{j_b}^{2}  \bmW_b \bmV_{j_b} )].
\end{aligned}
\end{equation}
By taking its derivative we get to the following KKT condition 
\begin{equation} \label{kkt_analog_beamformer}
     \begin{aligned}
    \sum_{j_b \in \mathcal{D}}   (\mathbf{\Sigma}_{j_b}^{1}\bmW_b \mathbf{V}_{j_b} \mathbf{V}_{j_b}^H (\bmI + \mathbf{V}_{j_b} & \mathbf{V}_{j_b}^H {\bmW_b}^H \mathbf{\Sigma}_{j_b}^{1}  \bmW_b  )^{-1} \\& -  \mathbf{\Sigma}_{j_b}^{2} \bmW_b \mathbf{V}_{j_b} \mathbf{V}_{j_b}^H ) =0.
    \end{aligned}
\end{equation}
 
\begin{thm}\label{analog_BF_solution}
The optimal vectorized unconstrained analog beamformer $\bmW_b$ for \eqref{analog_BF_resated} can be optimized as one GDE solution of the pair of the sum of following matrices
\begin{equation} \label{analog_BF}
    \begin{aligned}
      \mbox{vec}(\bmW_b)  = &  \bmD_{1}(\hspace{-1mm} \sum_{j_b \in \mathcal{D}_b} \hspace{-1mm}(\mathbf{V}_{j_b} \mathbf{V}_{j_b}^H (\bmI + \mathbf{V}_{j_b} \mathbf{V}_{j_b}^H {\bmW_b}^H \mathbf{\Sigma}_{j_b}^{1}\bmW_b )^{-1})^T  \\& \otimes \mathbf{\Sigma}_{j_b}^{1},\;  \sum_{j_b \in \mathcal{D}_b} (\mathbf{V}_{j_b} \mathbf{V}_{j_b}^H )^T   \otimes (\mathbf{\Sigma}_{j_b}^{2})).
    \end{aligned}  
\end{equation} 
\begin{proof}
The proof is provided in Appendix \ref{AP.2}
\end{proof} % \vspace{-2mm}
 \end{thm}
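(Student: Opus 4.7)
The plan is to reduce the matrix-valued KKT condition \eqref{kkt_analog_beamformer} to a generalized eigenvalue problem in the vectorized variable $\mbox{vec}(\bmW_b)$, and then identify the solution with the principal GDE, mirroring the strategy of Theorem \ref{digital_BF_solution} but lifted to $\mathbb{C}^{M_b M_b^{RF}}$ via the Kronecker product.

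First, I would rewrite \eqref{kkt_analog_beamformer} term by term so that $\bmW_b$ appears as the middle factor of a product $\mathbf{A}_{j_b}\bmW_b\mathbf{B}_{j_b}$. For the first term choose $\mathbf{A}_{j_b}=\mathbf{\Sigma}_{j_b}^{1}$ and $\mathbf{B}_{j_b}=\mathbf{V}_{j_b}\mathbf{V}_{j_b}^H\bigl(\bmI+\mathbf{V}_{j_b}\mathbf{V}_{j_b}^H\bmW_b^H\mathbf{\Sigma}_{j_b}^{1}\bmW_b\bigr)^{-1}$; for the second, $\mathbf{A}_{j_b}=\mathbf{\Sigma}_{j_b}^{2}$ and $\mathbf{B}_{j_b}=\mathbf{V}_{j_b}\mathbf{V}_{j_b}^H$. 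Applying the standard identity $\mbox{vec}(\mathbf{A}\bmW_b\mathbf{B})=(\mathbf{B}^T\otimes\mathbf{A})\mbox{vec}(\bmW_b)$ to both sums converts \eqref{kkt_analog_beamformer} into the equivalent linear-looking equation
\begin{equation*}
\Bigl[\sum_{j_b\in\mathcal{D}_b}\bigl(\mathbf{V}_{j_b}\mathbf{V}_{j_b}^H(\bmI+\mathbf{V}_{j_b}\mathbf{V}_{j_b}^H\bmW_b^H\mathbf{\Sigma}_{j_b}^{1}\bmW_b)^{-1}\bigr)^T\!\!\otimes\mathbf{\Sigma}_{j_b}^{1}\Bigr]\mbox{vec}(\bmW_b)=\Bigl[\sum_{j_b\in\mathcal{D}_b}(\mathbf{V}_{j_b}\mathbf{V}_{j_b}^H)^T\!\!\otimes\mathbf{\Sigma}_{j_b}^{2}\Bigr]\mbox{vec}(\bmW_b),
\end{equation*}
which is exactly a generalized eigenvector relation $\mathbf{M}_1\,\mbox{vec}(\bmW_b)=\mu\,\mathbf{M}_2\,\mbox{vec}(\bmW_b)$ with $\mu=1$ at a stationary point. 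Any such $\mbox{vec}(\bmW_b)$ is therefore a generalized eigenvector of the pair $(\mathbf{M}_1,\mathbf{M}_2)$ appearing in \eqref{analog_BF}. Note that $\mathbf{M}_1$ depends on $\bmW_b$ through the inner inverse, so, as in the digital step, we use the value of $\bmW_b$ from the previous alternating-optimization iteration when evaluating $\mathbf{M}_1$; this is the usual MM/alternating fixed-point treatment and is consistent with the interpretation of the gradients \eqref{grad_kb}--\eqref{grad_jb} as local linearizations.

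Second, I would argue that among all generalized eigenvectors, the WSR-maximizing one is the principal (dominant) GDE, whence $\mbox{vec}(\bmW_b)=\bmD_{1}(\mathbf{M}_1,\mathbf{M}_2)$. The justification parallels Theorem \ref{digital_BF_solution}: evaluating the restated objective \eqref{analog_BF_resated} along a unit-norm candidate $\mbox{vec}(\bmW_b)$ reduces (via $\mbox{lndet}(\bmI+\mathbf{V}_{j_b}^H\bmW_b^H\mathbf{\Sigma}_{j_b}^{1}\bmW_b\mathbf{V}_{j_b})=\mbox{lndet}(\bmI+\mathbf{V}_{j_b}\mathbf{V}_{j_b}^H\bmW_b^H\mathbf{\Sigma}_{j_b}^{1}\bmW_b)$ and the Kronecker identity $\mbox{Tr}(\mathbf{A}\bmW_b\mathbf{B}\bmW_b^H)=\mbox{vec}(\bmW_b)^H(\mathbf{B}^T\otimes\mathbf{A})\mbox{vec}(\bmW_b)$) to a generalized Rayleigh-quotient--type expression in $\mbox{vec}(\bmW_b)$, which is maximized by the top generalized eigenvector of $(\mathbf{M}_1,\mathbf{M}_2)$. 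Because a single vector must serve all DL users in the cell, only one GDE is extracted, explaining the subscript $1$ in $\bmD_{1}$.

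The delicate step will be the self-consistency of $\mathbf{M}_1$ in $\bmW_b$: strictly speaking \eqref{analog_BF} defines a fixed point rather than a closed-form eigenvector, and I would make explicit that within one alternating iteration we freeze $\bmW_b$ inside $\mathbf{M}_1$ to its current value, guaranteeing a well-posed generalized eigenproblem at each update; the monotone ascent of the MM surrogate \eqref{Largrangian} then ensures convergence. A final remark would handle the constraints deliberately omitted at this stage: the unit-modulus and discrete-phase constraints \eqref{c3} are enforced a posteriori by extracting the phasors $\angle\bmW_b$ from $\mbox{unvec}(\mbox{vec}(\bmW_b))$ and passing them through $\mathbb{Q}_b(\cdot)$, while the coupled sum-power constraint \eqref{c2} is absorbed into $\mathbf{\Sigma}_{j_b}^{2}$ through $\psi_b$ in \eqref{sigma_local_var} and tuned by bisection, as is standard in this framework.
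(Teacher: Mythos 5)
Your proposal follows essentially the same route as the paper's Appendix B: you observe that the KKT condition \eqref{kkt_analog_beamformer} has the form $\mathbf{A}_1\bmW_b\mathbf{A}_2 = \mathbf{B}_1\bmW_b\mathbf{B}_2$ and is therefore not directly resolvable, apply the identity $\mbox{vec}(\mathbf{A}\mathbf{X}\mathbf{B})=(\mathbf{B}^T\otimes\mathbf{A})\mbox{vec}(\mathbf{X})$ to obtain a generalized eigenvector relation in $\mbox{vec}(\bmW_b)$, and then carry over the Appendix-A-style argument to identify the dominant GDE. Your added remarks on the fixed-point nature of $\mathbf{M}_1$ (frozen at the previous iterate) and the a posteriori projection onto $\mathcal{P}_b$ are consistent with how the paper treats these points elsewhere; the argument is correct and matches the paper's proof.
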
  
The result stated in Theorem \ref{analog_BF_solution} provides the optimized vectorized unconstrained analog beamformer. Operation $\mbox{unvec}(\mbox{vec}(\bmW_b))$ is required to reshape it into the correct dimensions. To projects its elements on the set $\mathcal{P}_b$,
we do $\bmW_b = \mathbb{Q}_b(\angle\bmW_b(m,n)) \in \mathcal{P}_b, \forall m,n$. Note that such operation results in optimality loss, which  depends on the resolution of the phase shifts, and will be evaluated in Section \ref{simulazioni}.

 \subsection{Analog Combining}
 %Consider the optimization of the analog combiner $\bmF_b$, which is straightforward compared to the analog beamformer. Note that the analog combiners do not appear in the trace operators of \eqref{Largrangian_min} but only in $\mathbf{\Sigma}_{k_b}^{1}, \forall k_b$ in cell $b\in \mathcal{B}$ as they do not generate any interference towards other links. Therefore, to optimize $\bmF_b$ we can directly consider the original problem \eqref{WSR} which is purely concave for $\bmF_b$.

 The optimization process for the analog combiner, denoted as $\bmF_b$, is comparatively straightforward when compared to the optimization of the analog beamformer. It is noteworthy that the analog combiners do not appear in the trace operators of equation \eqref{Largrangian_min}. Instead, they are solely present in $\mathbf{\Sigma}_{k_b}^{1}, \forall k_b$ within cell $b \in \mathcal{B}$. This absence is due to the fact that the analog combiners do not generate any interference towards other links. Consequently, in order to optimize $\bmF_b$, we can directly focus on the original problem instead of minorized WSR with respect to $\bmF_b$.
 
 %The analog combiner must combine the signal received at the antennas of the FD BSs. 
 %By considering the unconstrained $\bmF_b$ and using the properties of logarithm function, the WSR maximization problem with respect to $\bmF_b$ can be stated as

%For the optimization of $\bmF_b$, \eqref{WSR} is purely concave and can be solved directly.
%Given that the analog combiner $\bmF_b$ does not generate any interference, the WSR is purely concave with respect to $\bmF_b$ in the received covariance matrices $\bmR_{k_b}$ and $\bmR_{\overline{k}_b}$ for $b \in \mathcal{B}$. Therefore, the original WSR maximization problem \eqref{WSR} can be considered to optimize $\bmF_b$.

%The objective of the analog combiner $\bmF_b$ is to combine the received covariance matrices at the antenna level such that the WSR is maximized.
%Let $(\bmR_{k_b}^a)$ $\bmR_{\overline{k}_b}^a$ denote the (signal plus) interference and noise covariance matrices received at the antennas of the FD BS $b \in \mathcal{B}$ to be combined with $\bmF_b$. Given $\bmR_{k_b}^a$ and $\bmR_{\overline{k}_b}^a$, the matrices $\bmR_{k_b}$ and $\bmR_{\overline{k}_b}$ can be recovered as $\bmR_{k_b} = {\bmF_b}^H \bmR_{k_b}^a \bmF_b$ and $\bmR_{\overline{k}_b} = {\bmF_b}^H \bmR_{\overline{k}_b}^a \bmF_b$. Problem \eqref{WSR} to optimize $\bmF_b$, by using the properties of the logarithm function, can be restated as

The primary objective of the analog combiner is to effectively combine the received covariance matrices at the antenna level in order to maximize the WSR. Within this context, let $(\bmR_{k_b}^a)$ and $\bmR_{\overline{k}b}^a$ represent the (signal plus) interference and noise covariance matrices received at the antennas of the FD BS $b \in \mathcal{B}$, which are to be combined with $\bmF_b$. By leveraging $\bmR_{k_b}^a$ and $\bmR_{\overline{k}b}^a$, the matrices $\bmR_{k_b}$ and $\bmR_{\overline{k}b}$ can be derived as $\bmR{k_b} = {\bmF_b}^H \bmR_{k_b}^a \bmF_b$ and $\bmR_{\overline{k}b} = {\bmF_b}^H \bmR{\overline{k}_b}^a \bmF_b$. To optimize the unconstrained $\bmF_b$, the logarithm function's properties can be effectively employed, which allows stating the following optimization problem

\begin{equation} \label{combiner_opt}
\begin{aligned}
   \underset{\substack{\bmF_b}}{\max}   \sum_{k_b \in \mathcal{U}_b} \hspace{-2mm} w_{k_b}  [ & \mbox{ln det}( {\bmF_b}^H \bmR_{k_b}^a  \bmF_b  ) - \mbox{ln det}({\bmF_b}^H \bmR_{\overline{k}_b}^a \bmF_b ) ].
\end{aligned}
\end{equation}
By solving the purely concave optimization problem \eqref{combiner_opt}, we obtain the optimal analog combiner as follows.

%Taking the derivative of \eqref{combiner_opt} with respect to $\bmF_b$ leads to the following KKT condition
%\begin{equation} \label{KKT_combiner}
%\begin{aligned}
 %   \sum_{k_b \in \mathcal{U}_b} & w_{k_b} {\bmR_{k_b}^a}  {\bmF_b}  ( {\bmF_b}^H \bmR_{{\bmR_{\overline{k}_b}}}^a  \bmF_b  )^{-1} \\& -  \sum_{k_b \in \mathcal{U}_b} w_{k_b} {\bmR_{\overline{k}_b}^a}  {\bmF_b}  ( {\bmF_b}^H \bmR_{{\bmR_{\overline{k}_b}}}^a  \bmF_b  )^{-1}  =0.
%\end{aligned}
%\end{equation}
%From \eqref{KKT_combiner}, it is immediate to see that the WSR maximizing analog combiner $\bmF_b$ can be obtained as the GDE solution of the pair of the sum of the received covariance matrices at the antenna level from the UL users in the same cell, i.e. 
 \begin{equation} \label{analog_combiner}
     \bmF_b = \bmD_{N_b^{RF}}( \sum_{k_b \in \mathcal{U}_b} w_{k_b} \bmR_{k_b}^a ,\sum_{k_b \in \mathcal{U}_b} w_{k_b} \bmR_{\overline{k}_b}^a ),
 \end{equation}
 where the matrix $\bmD_{N_b^{RF}}$ selects GDEs equal to the number of receive RF chains $N_b^{RF}$ at the FD BS $b \in \mathcal{B}$. In order to satisfy the constraints for \eqref{analog_combiner}, we normalize the amplitudes of $\bmF_b$ using the $\angle \cdot$ operation, and then subject it to quantization through the function $\mathbb{Q}_b(\angle \bmF_b(m,n)) \in \mathcal{P}_b$. It should be noted that this quantization process introduces a loss in optimality, which is contingent upon the resolution of the phase shift and will be assessed in Section \ref{simulazioni}.
 
 %To meet the constraints for \eqref{analog_combiner}, we normalize its amplitudes with the $\angle \cdot $ and pass it through the quantizer such that $\bmF_b = \mathbb{Q}_b(\angle \bmF_b(m,n)) \in \mathcal{P}_b$. Recall that this results in optimality loss, which depends on the phase shift resolution and will be evaluated in Section \ref{simulazioni}.

\subsection{Optimal Power Allocation} \label{pow_alloc}

Let $\mathbf{P}_{k_b}$ and $\mathbf{P}_{j_b}$ denote the diagonal stream power matrices for the UL user $k_b \in \mathcal{U}_{b}$ and DL user $j_b \in \mathcal{D}_{b}$, respectively, to be included in the beamformers $\bmU_{k_b}$ and $\bmV_{k_b}$, respectively. Given the normalized digital beamformers, the optimal power allocation problems can be formally stated as
 
 \begin{subequations}
\begin{equation} \label{power_UL}
\begin{aligned}
    \underset{\mathbf{P}_{k_b}}{\max} &\quad w_{k_b} \mbox{ln det}( \bmI + \bmU_{k_b}^H \mathbf{\Sigma}_{k_b}^{1} \bmU_{k_b}  \mathbf{P}_{k_b} ) - \mbox{Tr}(\bmU_{k_b}^H \mathbf{\Sigma}_{k_b}^{2} \bmU_{k_b}  \mathbf{P}_{k_b} ),   
\end{aligned}
\end{equation}
\begin{equation}\label{power_DL}
\begin{aligned}
        \underset{\mathbf{P}_{j_b}}{\max} \quad  w_{j_b} \mbox{ln det}( \bmI + &\bmV_{j_b}^H {\bmW_b}^H \mathbf{\Sigma}_{j_b}^{1} {\bmW_b} \bmV_{j_b} \mathbf{P}_{j_b} )   \\& - \mbox{Tr}(\bmV_{j_b}^H {\bmW_b}^H \mathbf{\Sigma}_{j_b}^{2} {\bmW_b} \bmV_{j_b} \mathbf{P}_{j_b} ).
\end{aligned}
\end{equation}
 \end{subequations}
%Recall that right multiplying the unit-norm beamformers with powers is equivalent to not normalized digital beamformers which also includes the optimal power allocation, therefore the optimality of \eqref{digital_solution_DL}-\eqref{digital_solution_UL} is preserved.
It is important to remember that when we multiply the unit-norm beamformers with the corresponding powers, it is equivalent to using non-normalized digital beamformers that incorporate the optimal power allocation. Hence, the optimality of \eqref{digital_solution_DL}-\eqref{digital_solution_UL} remains intact. Solving \eqref{power_UL}-\eqref{power_DL} leads to the following optimal power allocation  
  \begin{subequations} \label{optimal_pow}
 \begin{equation} 
     \mathbf{P}_{k_b} \hspace{-1mm}= \hspace{-1mm}( w_{k_b}  (\bmU_{k_b}^H \mathbf{\Sigma}_{k_b}^{2} \bmU_{k_b} )^{-1}\hspace{-1mm}-\hspace{-1mm} (\bmU_{k_b}^H \mathbf{\Sigma}_{k_b}^{1} \bmU_{k_b}  )^{-1} )^{+}, 
       \end{equation}
     \begin{equation}
     \begin{aligned}
       \mathbf{P}_{j_b}  \hspace{-1.5mm} =  \hspace{-1.2mm}( w_{j_b}  ( \bmV_{j_b}^H {\bmW_b}^H \mathbf{\Sigma}_{j_b}^{2} {\bmW_b} \bmV_{j_b})^{-1}  \hspace{-1.7mm}-  (\bmV_{j_b}^H {\bmW_b}^H \mathbf{\Sigma}_{j_b}^{1} {\bmW_b} \bmV_{j_b})^{-1} \hspace{-0.5mm} )^{+}, 
     \end{aligned}
 \end{equation} 
\end{subequations}
where $(\bmX)^+ = max\{\mathbf{0},\bmX\}$. 
Given the optimal stream powers, we can search for the Lagrange multipliers satisfying the total sum-power constraint. Let $\mathbf{P}^{DL}$ and $\mathbf{P}^{UL}$ denote the collection of powers in DL and UL, respectively, and
let $\mathbf{\Lambda}$ and $\mathbf{\Psi}$ denote
the collection of multipliers for $\lambda_{k_b}$ and $\psi_b$, respectively. Given $\eqref{optimal_pow}$, consider the dependence of the Lagrangian only on the multipliers and powers as $\mathcal{L}(\mathbf{\Lambda},\mathbf{\Psi},\mathbf{P}^{DL},\mathbf{P}^{UL})$, obtained by including the power matrices  $\mathbf{P}_{k_b}$ and $\mathbf{P}_{j_b}$ in \eqref{Largrangian_min}.

The multipliers in $\mathbf{\Lambda}$ and $\mathbf{\Psi}$ should be such that the Lagrangian is finite and the values of multipliers are strictly positive, i.e.,
\begin{equation} \label{lag_min_max}
\begin{aligned}
   \underset{\mathbf{\Psi},\mathbf{\Lambda}}{\min}\,\,& \underset{\mathbf{P}^{DL},\mathbf{P}^{UL}}{\max}  \quad  \mathcal{L}(\mathbf{\Lambda},\mathbf{\Psi},\mathbf{P}^{DL},\mathbf{P}^{UL}), \\
  &\quad \quad \mbox{s.t.} \quad \quad \mathbf{\Psi} ,\mathbf{\Lambda} \succeq 0.
\end{aligned}
\end{equation}
The dual function $
    \max_{\mathbf{P}^{DL},\mathbf{P}^{UL}} \mathcal{L}$
is the pointwise supremum of a family of functions of $\mathbf{\Psi},\mathbf{\Lambda}$, it is convex \cite{boyd2004convex} and the
globally optimal values for $\mathbf{\Psi}$ and $\mathbf{\Lambda}$ can be found by using any of the numerous convex-optimization techniques. In this work, we adopt the Bisection method. Let $\underline{\lambda_{k_b}}, \underline{\psi_b}$ and $\overline{\psi_b},\overline{\lambda_{k_b}}$
denote the upper and lower bounds for searching the multipliers $\psi_b$ and $\lambda_{k_b}$, respectively, and let $[0,\lambda_{k_b}^{max}]$ and $[0,\psi_{b}^{max}]$ denote their search range.  
Note also that as the GDE solution is computed given fixed multipliers, doing water-filling for the powers while searching for the multipliers leads to non diagonal power matrices. Hence, consider a SVD of the powers as $[\bmL_{\bmP_i}^{svd}, \bmD_{\bmP_i}^{svd} ,\bmR_{\bmP_i}^{svd}] = svd(\mathbf{P}_{i})$,
with $\bmP_i = \bmP_{k_b}$ or $\bmP_i = \bmP_{j_b}$, and the matrices $\bmL_{\bmP_i}^{svd},\bmD_{\bmP_i}^{svd}$ and $\bmR_{\bmP_i}^{svd}$ denote the 
left unitary, diagonal and right unitary matrices obtained from the SVD of $\bmP_i$. The diagonal structure of the power matrices while searching for the multipliers can be re-established as $\mathbf{P}_{i} = \bmD_{i}$, with $i\in \mathcal{U}_b$ or $ \mathcal{D}_b$.

By using the closed-form expressions derived above, the complete alternating optimization-based C-HYBF procedure to optimize \eqref{Largrangian_min} is formally stated in Algorithm \ref{alg_1}.  

\begin{algorithm}[t]  \footnotesize
\caption{Centralized Hybrid Beamforming}\label{alg_1}
\textbf{Given:} $\mbox{The CSI and rate weights.}$\\
\textbf{Initialize:}\;$\bmW_b, \bmF_b,\bmV_{j_b}, \bmU_{k_b}, \quad \forall j_b \;\&\; \forall k_b $.\\
\textbf{Set:} $\underline{\lambda_{k_b}} = 0,\overline{\lambda_{k_b}} = \lambda_{k_b}^{max}, \underline{\psi_b}=0,\overline{\psi_b}= \psi_b^{max}$, $\forall k_b \;\&\; \forall b $ \\
\textbf{Repeat until convergence}
\begin{algorithmic}
\STATE \hspace{0.001cm} \textbf{for} $b = 1:B$
\STATE \hspace{0.2cm} Compute $\bmW_b$ with \eqref{analog_BF}, do $\mbox{unvec}(\bmW_b)$ and get $\angle \bmW_b$ \\
\STATE \hspace{0.2cm} \textbf{for:} $j_b =1:D_b$ 
\STATE \hspace{0.5cm} Compute $\hat{\bmG}_{j_b,b}^{UL}, \hat{\bmG}_{{j_b},b}^{DL},\hat{\bmG}_{j_b,\overline{b}}^{UL},\hat{\bmG}_{j_b,\overline{b}}^{DL}$ from Table \ref{table_gradients} \\  \hspace{0.5cm} Optimize $\bmV_{j_b}$ with \eqref{digital_solution_DL} and normalize its columns \\
\STATE \hspace{0.2cm} Next $j_b$
\STATE \hspace{0.2cm} \textbf{Repeat until convergence}\\
\STATE \hspace{0.5cm} set $\psi_b = (\underline{\psi_b} + \overline{\psi_b})/2$  \\
\STATE \hspace{0.5cm} \textbf{for} $j_b =1:D_b$
\STATE \hspace{0.9cm} Compute $\mathbf{P}_{j_b}$ with \eqref{optimal_pow}, do SVD, set $\mathbf{P}_{j_b} = \bmD_{P_{j_b}}^{svd}$ \\ \hspace{0.85cm} Set $\bmQ_{j_b} = \bmW_b \bmV_{j_b} \mathbf{P}_{j_b} \bmV_{j_b}^H \bmW_b^H $ 
\STATE \hspace{0.5cm} Next $j_b$
\STATE \hspace{0.5cm} \textbf{if} constraint for $\psi_b$ is violated\\
\STATE \hspace{1.3cm} set $\underline{\psi_b} = \psi_b$ \\
\STATE \hspace{0.5cm} \textbf{else} $\overline{\psi_b} = \psi_b$\\
\STATE \hspace{0.2cm} \textbf{for:} $ k_b = 1:K_b$  \\
\STATE \hspace{0.5cm} Compute $\hat{\bmG}_{k_b,b}^{UL}, \hat{\bmG}_{{k_b},b}^{DL},\hat{\bmG}_{k_b,\overline{b}}^{UL},\hat{\bmG}_{k_b,\overline{b}}^{DL} $ from Table \ref{table_gradients}.
\STATE \hspace{0.5cm}
Optimize $\bmU_{k_b}$ with \eqref{digital_solution_UL} and normalize its columns \\
\STATE \hspace{0.9cm} \textbf{Repeat until convergence}\\
\STATE \hspace{1.1cm} set $\lambda_{k_b} = (\underline{\lambda_{k_b}} + \overline{\lambda_{k_b}})/2$  \\ 
\STATE \hspace{1.1cm} Compute $\mathbf{P}_{k_b}$ with \eqref{optimal_pow}, do SVD, set $\mathbf{P}_{k_b} = \bmD_{P_{k_b}}^{svd}$ \\ \hspace{1.1cm} Set  $\bmT_{_b} = \bmU_{k_b} \mathbf{P}_{k_b} \bmU_{k_b}^H$
\STATE \hspace{1.1cm} \textbf{if} constraint for $\lambda_{k_b}$ is violated\\
\STATE \hspace{1.3cm} set $\underline{\lambda_{k_b}}= \lambda_{k_b}$  \\
\STATE \hspace{1.1cm} \textbf{else} $\overline{\lambda_{k_b}} = \lambda_{k_b}$\\
\STATE \hspace{0.2cm} Next $k_b$
\STATE \hspace{0.001cm} Next $b$
\end{algorithmic}
\textbf{Quantize}  $\bmW_b$ and $\bmF_b,$ with $\mathbb{Q}_b(\cdot), \forall b$\
\label{algo1}  
\end{algorithm}

\subsection{Convergence of C-HYBF}

The convergence of Algorithm \ref{alg_1} can be proved by using the 
minorization theory \cite{stoica2004cyclic}, alternating or cyclic optimization \cite{stoica2004cyclic}, Lagrange dual function \cite{boyd2004convex}, saddle-point interpretation \cite{boyd2004convex} and KKT conditions \cite{boyd2004convex}. For the WSR cost function \eqref{WSR}, we construct its minorizer, which is a touching lower bound for \eqref{WSR}, hence we can write

\begin{equation}
\begin{aligned}
\mbox{WSR} & \geq   \underline{\mbox{WSR}}  = \underline{\mbox{WR}}_{k_b,b}^{UL} + \underline{\mbox{WSR}}_{\overline{k}_b,b}^{UL} + \underline{\mbox{WR}}_{j_b,b}^{DL} +\underline{\mbox{WSR}}_{\overline{j}_b,b}^{DL} \\&  
+\underline{\mbox{WSR}}_{\overline{b}}^{DL} +\underline{\mbox{WSR}}_{\overline{b}}^{UL}.
\end{aligned}
\label{eqWSR2}
\end{equation}

The minorized WSR, which is concave in $\mathbf{T}_{k_b}$ and $\mathbf{Q}_{j_b}$,  has the same gradient of the original WSR maximization problem \eqref{WSR}, hence the KKT conditions are not affected.
Reparameterizing $\mathbf{T}_{k_b}$ or $\mathbf{Q}_{j_b}$ in terms of $\bmW_b, \mathbf{V}_{j_b}, \forall j_b \in \mathcal{D}_b$, or $\mathbf{U}_{k_b}, \forall k_b \in \mathcal{U}_b$, respectively, augmenting the minorized WSR cost function with the Lagrange multipliers and power constraints leads to \eqref{Largrangian}. By incorporating further the power matrices we get to $\mathcal{L}(\mathbf{\Lambda},\mathbf{\Psi},\mathbf{P}^{DL},\mathbf{P}^{UL})$. Every alternating update of the $\mathcal{L}$ for the variables $\bmW_b, \bmF_b, \forall b \in \mathcal{B}, \mathbf{V}_{j_b}, \forall {j_b} \in \mathcal{D}_b, \mathbf{U}_{k_b}, \forall k_b \in \mathcal{U}_b, \bmP_{k_b}, \bmP_{j_b}, \lambda_{k_b}$ and $\psi_b$, leads to a monotonic increase of the WSR, which assures
convergence. For the KKT conditions, at the convergence point, the gradients of $\mathcal{L}$ for $\mathbf{V}_{j_b},\bmW_b, \mathbf{U}_{k_b}$ or $\bmP_{k_b}, \bmP_{j_b}$ correspond to the gradients of the Lagrangian of the original
problem \eqref{WSR}, and hence the sub-optimal solution for the minorized WSR matches the sub-optimal solution of the original problem. For the fixed analog and digital beamformers, $\mathcal{L}$ is concave in powers, hence we have strong duality for the saddle point, i.e.,
\begin{equation}
    \max_{\bmP^{DL},\bmP^{UL}} \min_{\mathbf{\Lambda},\mathbf{\Psi}} \quad \mathcal{L}(\mathbf{\Lambda},\mathbf{\Psi},\bmP^{UL},\bmP^{DL}). 
\end{equation}
Let $\bmX^*$ and $x^*$
denote the optimal solution for matrix $\bmX$ or scalar $x$ at the convergence, respectively. As each iteration leads to a monotonic increase in the WSR and the power are updated by satisfying the sum-power constraint, at the convergence point, the solution of the optimization problem 
\begin{equation}
    \min_{\mathbf{\Lambda},\mathbf{\Psi}} \quad \mathcal{L}(\bmV_{j_b}^*,{\bmW_b}^{*},{\bmF_b}^{*},\bmU_b^*,{\bmP^{DL}}^*,{\bmP^{UL}}^*,\mathbf{\Lambda},\mathbf{\Psi})
\end{equation}
satisfies the KKT conditions for the powers in $\bmP^{DL}$ and $\bmP^{UL}$ and the complementary slackness conditions
\begin{subequations}
\begin{equation}
\psi_b^* \, (p_b - \sum_{j_b\in \mathcal{D}_b} \mbox{Tr}( {\bmW_b}^{*}\mathbf{V}_{j_b}^{*} \bmP_{j_b}^*\bmV_{j_b}^{*\,  H}{\bmW_b}^{*\,H})) = 0, 
\end{equation}\begin{equation}
\lambda_{k_b}^* \, (p_{k_b} -  \mbox{Tr}( \bmU_{k_b}^{*} \bmP_{k_b}^*\bmU_{k_b}^{*\,  H}))  = 0,
\end{equation}
\label{eqslackness}
\end{subequations}
with the individual factors in the products being non-negative.

\section{Parallel and Distributed Implementation} \label{Distributed_HYBF}
%As we discussed in Section \ref{Intro}, implementation of C-HYBF in a real-time large mmWave FD network is infeasible. 
%To overcome its drawbacks, we present the concept of P$\&$D-HYBF for mmWave based on cooperation among FD BSs. To develop the P$\&$D-HYBF scheme, we assume the following:
As previously mentioned in Section \ref{Intro}, the implementation of C-HYBF in a real-time, large-scale mmWave FD network is not feasible. In order to address these limitations, we introduce the concept of P$\&$D-HYBF for mmWave, which relies on cooperative interactions among neighbouring FD BSs. To establish the P$\&$D-HYBF scheme, we make the following assumptions:
\begin{enumerate}
   \item there exists a \emph{feedback link among the neighbouring FD BSs} and they cooperate by exchanging information about the digital beamformers, analog beamformers and analog combiners via the feedback link;
  \item \emph{local CSI} is accessible by each FD BS;
    \item each FD BS has \emph{multiple low-cost computational processors} dedicated for UL and DL;
    \item computations take place at the FD BSs in each cell in a \emph{synchronous} fashion, i.e., iteration $n$ at each BS takes place when iteration $n-1$ is completed by all the BSs.
\end{enumerate}  
Note that to satisfy 1), information can be broadcasted as well.
Recall that the MM optimization technique allowed us to write the Lagrangian of the original WSR problem \eqref{WSR} as 
 \begin{equation}\label{Largrangian}
\begin{aligned}
          \mathcal{L} = &\hspace{-1mm}\sum_{b \in \mathcal{B}}   \hspace{-1mm} \big[ \sum_{k_b \in \mathcal{U}_b} (w_{k_b} \mbox{ln det}( \bmI + \bmU_{k_b}^H \mathbf{\Sigma}_{k_b}^{1} \bmU_{k_b})
        - \mbox{Tr}(\bmU_{k_b}^H \mathbf{\Sigma}_{k_b}^{2} \bmU_{k_b}) \\& + \lambda_{k_b} p_{k_b}  )
            +\hspace{-1mm} \sum_{j_b \in \mathcal{D}_b} ( w_{j_b} \mbox{ln det}( \bmI +  \bmV_{j_b}^H {\bmW_b}^H \mathbf{\Sigma}_{j_b}^{1} {\bmW_b} \bmV_{j_b}) \\& -\mbox{Tr}(\bmV_{j_b}^H {\bmW_b}^H  \mathbf{\Sigma}_{j_b}^{2} {\bmW_b} \bmV_{j_b}) +  \psi_b p_b)   \big].
\end{aligned}
\end{equation}
%By analyzing its structure, it is to be noted that when optimizing one variable, assuming the remaining variables to be fixed, only the gradients appearing in $\mathbf{\Sigma}_{k_b}^{2}$ or $\mathbf{\Sigma}_{j_b}^{2}$ are required. Therefore, the gradients summarize complete information about all the remaining \emph{interfering} links in the network. From a practical point-of-view, the gradients for each link take into account the interference generated towards all the other links and hence limit greedy behaviour while optimizing its beamforming directions. However, $\eqref{Largrangian}$ is coupled among different links as the covariance matrices of other users directly appear in the gradients, which vary at the update of each beamformer.
Upon analyzing its structure, it is important to note that when optimizing a single variable, assuming the remaining variables are fixed, only the gradients present in $\mathbf{\Sigma}_{k_b}^{2}$ or $\mathbf{\Sigma}_{j_b}^{2}$ are required. Therefore, the gradients provide comprehensive information regarding the remaining \emph{interfering} links within the network. From a practical standpoint, the gradients for each link account for the interference generated towards all other links, thereby discouraging greedy behavior during the optimization of beamforming directions. However, it should be acknowledged that $\eqref{Largrangian}$ exhibits coupling among different links, as the covariance matrices of other users directly influence the gradients, which vary during the update of each beamformer.

To decouple \eqref{Largrangian} into local per-link independent optimization sub-problems, we assume that each FD BS has some memory to save information. Hereafter, overline will emphasize that the variables are only local and saved in the memory of each FD BS. We introduce the following local variables 
\begin{subequations}  \label{local_var_1} 
\begin{equation} 
\overline{\bmL}_{k_b}^{In} = \hat{\bmG}_{\overline{k}_b,b}^{UL}+ \hat{\bmG}_{k_b,b}^{DL},
  \; 
 \overline{\bmL}_{k_b}^{Out} = \hat{\bmG}_{k_b,\overline{b}}^{UL}+ \hat{\bmG}_{k_b,\overline{b}}^{DL}, \quad \forall k_b \in \mathcal{U}_b,
 \end{equation}
\begin{equation}  
\overline{\bmL}_{j_b}^{In}=  \hat{\bmG}_{j_b,b}^{UL} + \hat{\bmG}_{\overline{j}_b,b}^{DL},
 \; \; 
    \overline{\bmL}_{j_b}^{Out} = \hat{\bmG}_{j_b,\overline{b}}^{UL} + \hat{\bmG}_{j_b,\overline{b}}^{DL}, \quad \forall j_b \in \mathcal{D}_b,
\end{equation}
\end{subequations}
to be saved in the memory of each FD BS $b \in \mathcal{B}$. The local variables $\overline{\bmL}_{k_b}^{In}$ and $\overline{\bmL}_{k_b}^{Out}$ store information pertaining to the total interference generated within and outside the cell, respectively, by the beamformer of the UL user $k_b \in \mathcal{U}_b$. Similarly, the local variables $\overline{\bmL}_{j_b}^{In}$ and $\overline{\bmL}_{j_b}^{Out}$ capture information about the interference generated within the same cell and in neighbouring cells, respectively, by the FD BS $b \in \mathcal{B}$ while serving its DL user $j_b \in \mathcal{D}_b$. It is important to note that each FD BS can update the in-cell local variables $\overline{\bmL}_{k_b}^{In}$ and $\overline{\bmL}_{j_b}^{In}$ by itself, without requiring feedback. To update the out-cell variables $\overline{\bmL}_{k_b}^{Out}$ and $\overline{\bmL}_{j_b}^{Out}$, feedback from neighbouring BSs regarding their beamformers is necessary for computing \eqref{local_var_1}. Additionally, to store information regarding the interference-plus-noise covariance matrices at the RF chains and antenna level (for the analog combiner), the following local variables are defined
%The local variables $\overline{\bmL}_{k_b}^{In}$ and $\overline{\bmL}_{k_b}^{Out}$ save information about the overall interference generated inside and outside the cell by the beamformer of UL user $k_b \in \mathcal{U}_b$, respectively. Similarly, the local variables $\overline{\bmL}_{j_b}^{In}$ and $\overline{\bmL}_{j_b}^{Out}$ save information about the interference generated in the same cell and in the neighbouring cells by the FD BS $b \in \mathcal{B}$, respectively, while serving its DL user $j_b \in \mathcal{D}_b$. Note that each FD BS can update the in-cell local variables
%$\overline{\bmL}_{k_b}^{In}$ and $\overline{\bmL}_{j_b}^{In}$ by itself. Feedback from neighbouring BSs about their beamformers is required only to update the out-cell variables $\overline{\bmL}_{k_b}^{Out}$ and $\overline{\bmL}_{j_b}^{Out}$, such can \eqref{local_var} can be computed $\forall k_b, \forall j_b$. To save information about the interference-plus-noise covariance matrices at the RF chains and antenna level (for the analog combiner), we define the following local variables
\begin{subequations} \label{local_covariance}
\begin{equation} 
    \overline{\bmR}_{\overline{j}_b}^{-1} = \bmR_{\overline{j}_b}^{-1}, \quad \forall j_b, 
\end{equation} 
\begin{equation}
\overline{\bmR}_{\overline{k}_b}^{-1} = \bmR_{\overline{k}_b}^{-1}, \quad \overline{\bmR}_{k_b}^a = \bmR_{k_b}^a, \quad \overline{\bmR}_{\overline{k}_b}^a = \bmR_{\overline{k}_b}^a, \quad \forall k_b.
\end{equation}
\end{subequations}
For notational compactness, similar to \eqref{sigma_local_var}, we also define the following variables 
\begin{subequations} \label{local_var}
\begin{equation}
  \mathbf{Z}_{k_b}^{1} = \bmH_{k_b}^H \bmF_b \overline{\bmR}_{\overline{k}_b}^{-1} {\bmF_b}^H \bmH_{k_b}, \quad
       \mathbf{Z}_{k_b}^{2} =\overline{\bmL}_{k_b}^{In}+ \overline{\bmL}_{k_b}^{Out} + \lambda_{k_b} \bmI , \quad \forall k_b,
\end{equation}
\begin{equation}
  \mathbf{Z}_{j_b}^{1} = \bmH_{j_b}^H \overline{\bmR}_{\overline{j}_b}^{-1} \bmH_{j_b},  \quad   \quad \quad \;\;\;
     \mathbf{Z}_{j_b}^{2} = \overline{\bmL}_{j_b}^{In}+ \overline{\bmL}_{j_b}^{Out} + \psi_{b}\bmI, \quad \forall j_b,
\end{equation}\label{diagonal_matrices}
\end{subequations} 
which now depends only on the fixed local variables. By using the local variables, the Lagrangian \eqref{Largrangian} can be rewritten in terms of the information saved only in the local variables as  
\begin{equation}\label{Largrangian_decoupled}
\begin{aligned}
          \mathcal{L}  = &\sum_{b \in \mathcal{B}}    [\sum_{k_b \in \mathcal{U}_b} ( w_{k_b} \mbox{ln det}( \bmI + \bmU_{k_b}^H \mathbf{Z}_{k_b}^{1}\bmU_{k_b} )
            - \mbox{Tr}(\bmU_{k_b}^H \mathbf{Z}_{k_b}^{2}  \bmU_{k_b})   \\& +  \lambda_{k_b} p_{k_b} )
            +  \sum_{j_b \in \mathcal{D}_b} ( 
 w_{j_b} \mbox{ln det}( \bmI + \bmV_{j_b}^H {\bmW_b}^H \mathbf{Z}_{j_b}^{1} \bmW_b \bmV_{j_b} ) \\& -\mbox{Tr}(\bmV_{j_b}^H {\bmW_b}^H  \mathbf{Z}_{j_b}^{1} \bmW_b \bmV_{j_b})  +  \psi_b p_b) \big].
\end{aligned}
\end{equation} 
%In contrast to \eqref{Largrangian}, \eqref{Largrangian_decoupled} becomes fully decoupled per cell (not per link) as it is a function of the local variables, which are fixed for each FD BS and are updated solely when feedback from the neighbouring BSs is received. However, note that in UL and DL, the optimization of the analog combiner $\bmF_b$ in $\mathbf{Z}_{k_b}^{1}$ and analog beamformer $\bmW_b$, $\forall b \in  \mathcal{B}$, is still coupled as they are common among the UL and DL users in the same cell, respectively. Moreover, the analog beamformer $\bmW_b$ also affects the total transmit power of each FD BS, posing a serious challenge for enabling per-link independent optimization in DL from \eqref{Largrangian_decoupled}. Handling of the coupling constraints and P$\&$D optimization for HYBF from \eqref{Largrangian_decoupled} $\forall b \in \mathcal{B}$ is discussed in the following.
Unlike \eqref{Largrangian}, \eqref{Largrangian_decoupled} becomes fully decoupled at the cell level (not per link). This is because it relies on local variables that remain fixed for each FD BS and are updated only when feedback is received from neighbouring BSs. However, it should be noted that in both UL and DL, the optimization of the analog combiner $\bmF_b$ in $\mathbf{Z}_{k_b}^{1}$ and the analog beamformer $\bmW_b$ for all $b \in \mathcal{B}$ remains coupled, as they are shared among the UL and DL users within the same cell, respectively. Additionally, the analog beamformer $\bmW_b$ also impacts the total transmit power of each FD BS, presenting a significant challenge in achieving independent per-link optimization in the DL based on \eqref{Largrangian_decoupled}. The management of the coupling constraints and the P$\&$D optimization for HYBF, starting from \eqref{Largrangian_decoupled} $\forall b \in \mathcal{B}$ is discussed in the following.

%\begin{subequations}\label{tot_local}
% \begin{equation}\label{WSR_convex_local_var}
%\begin{aligned}
 %         \underset{\substack{\bmU,\bmV,\\\bmW_b,\bmF_b}}{\max} & \quad \hspace{-2mm} \sum_{b \in \mathcal{B}} \sum_{k_b \in \mathcal{U}_b} [w_{k_b} \mbox{ln det}( \bmI + \bmU_{k_b}^H \bmH_{k_b}^H {\bmF_b} \overline{\bmR}_{\overline{k}_b}^{-1} {\bmF_b}^H \\&\bmH_{k_b} \bmU_{k_b} )
           %- \mbox{Tr}(\bmU_{k_b}^H %(\overline{\bmL}_{k_b}^{Out} + \overline{\bmL}_{k_b}^{In})  \bmU_{k_b})]
         % \\& + \sum_{b \in \mathcal{B}} \sum_{j_b \in \mathcal{D}_b} w_{j_b} [\mbox{ln det}( \bmI + \bmV_{j_b}^H {\bmW_b}^H \bmH_{j_b}^H \overline{\bmR}_{\overline{j}_b}^{-1}\bmH_{j_b} \\& \bmW_b \bmV_{j_b} ) -\mbox{Tr}(\bmV_{j_b}^H {\bmW_b}^H ( \overline{\bmL}_{j_b}^{Out} + \overline{\bmL}_{j_b}^{In} ) \bmW_b \bmV_{j_b})].
%\end{aligned}
%\end{equation}
%\begin{equation}
 %   \mbox{s.t.} \hspace{3mm} \eqref{c1}-\eqref{c4}
%\end{equation}
%\end{subequations}

%Note that \eqref{WSR_convex_local_var} has the same structure of \eqref{WSR_convex}, but by replacing the gradients with the fixed local variables, the global WSR problem decouples into per-link independent optimization sub-problems. Optimization of the analog combiners and analog beamformers is still coupled as they are common to all the UL and DL users in the same cell, respectively. Also, optimization of the digital beamformers for the DL users in the same cell remains coupled as each BS has to satisfy the sum power constraint. Their decoupling and the solution of \eqref{tot_local} is discussed in the following.

\subsection{Per-Link Independent Sub-Problems in UL}
%In UL, each user has its own sum-power constraint but the analog combiner $\bmF_b$, appearing in $\textbf{Z}_{k_b}^1$, is common among all the UL users in the same cell. To decouple the optimization into per-link, we assume that FD BS $b \in \mathcal{B}$ updates $\bmF_b$ only after updating all the digital beamformers $\bmU_{k_b}, \forall k_b \in \mathcal{U}_b$. Given this assumption and fixed local variables, the UL WSR maximization problem for each FD BS reduces into three layers of sub-problems. Namely, at the bottom layer, FD BS $b \in \mathcal{B}$ has to solve independent sub-problems to update $\bmU_{k_b}$, whose optimization is fully decoupled and therefore can be executed in parallel $\forall k_b$. At the middle layer, FD BS $b \in \mathcal{B}$ has to independently update the stream power matrix $\bmP_{k_b}$ while searching the multiplier $\lambda_{k_b}$, $\forall k_b$, for its independent power constraint. Finally, at the top layer, once the two-layer UL sub-problems are solved, only one update of the common analog combiner is required. Fig. \ref{decomposition_UL} highlights the idea of the proposed per-link decomposition for the UL WSR for FD BS $b \in \mathcal{B}$ into three sub-layers, and the sub-problems at each layer must be solved from the bottom to the top.

In UL transmission, each user is subject to an individual sum-power constraint, while the analog combiner $\bmF_b$, present in $\textbf{Z}_{k_b}^1$, is shared among all UL users within the same cell. To achieve decoupled optimization on a per-link basis, we make the assumption that FD BS $b \in \mathcal{B}$ updates $\bmF_b$ only after updating all digital beamformers $\bmU_{k_b}$ for all $k_b \in \mathcal{U}b$. With this assumption and fixed local variables, the UL WSR maximization problem for each BS can be decomposed into into independent three layers of sub-problems. 

At the bottom layer, FD BS $b \in \mathcal{B}$ independently and in parallel solves sub-problems to update $\bmU_{k_b}, \forall k_b$. At the middle layer, FD BS $b \in \mathcal{B}$ independently updates the stream power matrix $\bmP_{k_b}$ while searching for the independent multiplier $\lambda_{k_b}, \forall k_b$ in parallel. Finally, at the top layer, after solving the two-layer UL sub-problems, only one update of the shared analog combiner is required. Figure \ref{decomposition_UL} visually illustrates the concept of per-link decomposition for UL WSR for FD BS $b \in \mathcal{B}$, comprising three sub-layers, with the sub-problems at each layer being solved sequentially from bottom to top, given the local variables have been recently updated based on the feedback.

Due to the per-link independent decomposition with fixed local variables, the Lagrangian for the UL user $k_b \in \mathcal{U}_b$ with independent sum-power constraint $p_{k_b}$, can be written as 
\begin{equation} \label{Lagrangian_uk_parallel}
    \begin{aligned}
   \mathcal{L}_{k_b}  = w_{k_b} \mbox{ln det}( \bmI + \bmU_{k_b}^H \mathbf{Z}_{k_b}^{1}\bmU_{k_b} )
            - \mbox{Tr}(\bmU_{k_b}^H \mathbf{Z}_{k_b}^{2}  \bmU_{k_b})  +  \lambda_{k_b} p_{k_b},
    \end{aligned}
\end{equation}
in which for the bottom layer the analog combiner $\bmF_b$ in $\mathbf{Z}_{k_b}^1$ and the powers are fixed. To optimize $\bmU_{k_b}$, a derivative of \eqref{Lagrangian_uk_parallel} can be taken, which leads to a similar KKT condition as
\eqref{kkt_uplink}, with $\mathbf{\Sigma}_{k_b}^{i}$ replaced with $\mathbf{Z}_{k_b}^{i}, \forall i$.
By following a similar proof of Appendix A, it can be easily shown that the WSR maximizing $\bmU_{k_b}$ for \eqref{Lagrangian_uk_parallel} can be computed as
\begin{equation} \label{UL_BF_parallel}
 \begin{aligned}
        \bmU_{k_b} = \bmD_{d_{k_b}}(\mathbf{Z}_{k_b}^{1}, \mathbf{Z}_{k_b}^{2} ).
 \end{aligned}
 \end{equation}
%Note that \eqref{UL_BF_parallel} can be computed in parallel by the multi-processor FD BS $b \in \mathcal{B}$, $\forall k_b \in \mathcal{U}_b$ as the common part $\bmF_b$ will be updated at the top layer and information about the interference generated towards other links is captured in the local variables, equals to the gradients from the previous feedback. 
Note that the computation of \eqref{UL_BF_parallel} can be parallelized across the multi-processor FD BS $b \in \mathcal{B}$, $ \forall k_b \in \mathcal{U}_b$. This parallelization is feasible due to the fact that the shared component $\bmF_b$ will be updated at the top layer, and the information pertaining to the interference generated towards other links is encapsulated within the local variables, which are fixed.

%At the middle layer, the power optimization remains decoupled as each UL user has its own sum-power constraint and the local variables are fixed.  To find the optimal $\bmP_{k_b}$ in parallel $\forall k_b$, we first consider the normalization of the columns of \eqref{UL_BF_parallel} to unit-norm $\forall k_b$, and the independent power allocation problem for P$\&$D-HYBF in UL can be formally stated as

In the intermediate layer, the power optimization remains decoupled due to the individual sum-power constraints of each UL user and the fixed local variables. In order to determine the optimal $\bmP_{k_b}$ simultaneously $ \forall k_b$, we begin by normalizing the columns of \eqref{UL_BF_parallel} to have unit norm $\forall k_b$. The independent power allocation problem for P$\&$D-HYBF in the UL can be formally defined as follows:
\begin{equation} \label{power_UL_PD}
    \underset{\mathbf{P}_{k_b}}{\max} \quad [w_{k_b} \mbox{ln det}( \bmI + \bmU_{k_b}^H \mathbf{Z}_{k_b}^{1} \bmU_{k_b}  \mathbf{P}_{k_b} ) - \mbox{Tr}(\bmU_{k_b}^H \mathbf{Z}_{k_b}^{2} \bmU_{k_b}  \mathbf{P}_{k_b} )].  
\end{equation}
Solving \eqref{power_UL_PD} independently $\forall k_b$ yields the following parallel power allocation scheme 
\begin{equation}\label{optimal_pow_UL_per_link}
     \begin{aligned}
     \bmP_{k_b} = ( w_{k_b}  (&\bmU_k^H \mathbf{Z}_{k_b}^{1} \bmU_{k_b} )^{-1} - (\bmU_{k_b}^H \mathbf{Z}_{k_b}^{2} \bmU_{k_b} )^{-1} )^{+}, 
 \end{aligned}
 \end{equation}
which can be computed while searching for the multiplier $\lambda_{k_b}$ associated with its independent sum-power constraint in parallel $\forall k_b$. It is worth noting that the power allocation problem consists of a purely concave component and a linear component that is simultaneously convex and concave. As a result, the overall problem is concave, and the solution provided by \eqref{optimal_pow_UL_per_link} yields an optimal power allocation scheme. The multiplier $\lambda_{k_b}, \forall k_b$, must satisfy the condition that \eqref{Lagrangian_uk_parallel} is finite and that $\lambda_{k_b}$ is strictly positive. This multiplier can be determined by solving the following problem in parallel:
\begin{equation} \label{lag_min_max_parallel}
\begin{aligned}
   \underset{\lambda_{k_b}}{\min}\,\,& \underset{\bmP_{k_b}}{\max}  \quad  \mathcal{L}_{k_b}(\lambda_{k_b},\bmP_{k_b}), \\
  &\quad \quad \mbox{s.t.} \quad \quad \lambda_{k_b} \succeq 0,
\end{aligned}
\end{equation}
 while independently allocating the powers at the middle layer $\forall k_b$. The dual function
 \begin{equation}
     \underset{\bmP_{k_b}}{\max}  \quad  \mathcal{L}_{k_b}(\lambda_{k_b},\bmP_{k_b}),
 \end{equation} 
is convex \cite{boyd2004convex} and can be solved with the Bisection method, as for the C-HYBF scheme.  If $\bmP_{k_b}$ becomes non-diagonal, its diagonal structure can be reestablished as $\bmP_{k_b} = \bmD_{\bmP_{k_b}}^{svd}$, where $\bmD_{\bmP_{k_b}}^{svd}$ is a diagonal matrix obtained from SVD of the non-diagonal $\bmP_{k_b}$.
 
At the top layer, one update of $\bmF_b$ is required $\forall b \in \mathcal{B}$. Note that simultaneous variation in parallel of the beamformers $\bmU_{k_b}$ and powers $\bmP_{k_b}$, $\forall k_b,$ at the bottom and middle layer vary the received covariance matrices. This information should be updated in the local variables $\overline{\bmR}_{k_b}^a$ and $\overline{\bmR_{\overline{k}_b}^{a}}$ at the antenna level, which $\bmF_b$ should combine. As each FD BS $b\in \mathcal{B}$ has complete information about the  optimized variables at the middle and bottom layers, it can use it to update first $\overline{\bmR}_{k_b}^a$ and $\overline{\bmR_{\overline{k}_b}^{a}}, \forall k_b \in \mathcal{U}_b$. As the WSR is fully concave with respect to the analog combiner $\bmF_b$, to optimize it we consider the following optimization problem
for the unconstrained case
 \begin{figure*}[t]
    \centering
 \begin{minipage}{0.49\textwidth}
  \centering
\includegraphics[width=6cm,height=4cm,draft=false]{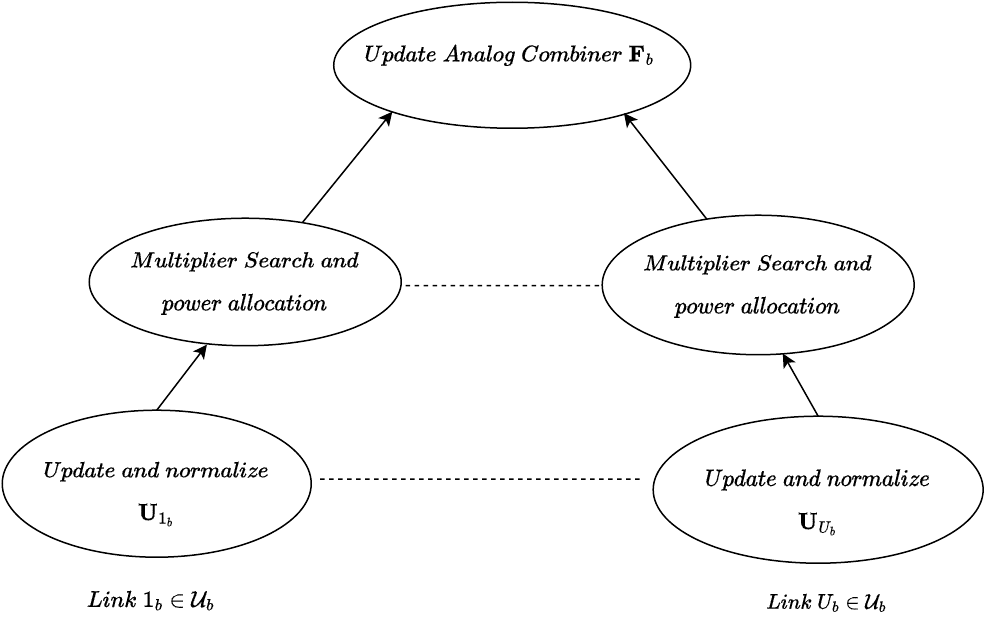}
    \caption{Decomposition of the UL WSR into three layers of sub-problems $\forall b \in \mathcal{B}$.}
    \label{decomposition_UL}
    \end{minipage} \hfill
      \begin{minipage}{0.49\textwidth}
   \centering
\includegraphics[width=6cm,height=4cm,draft=false]{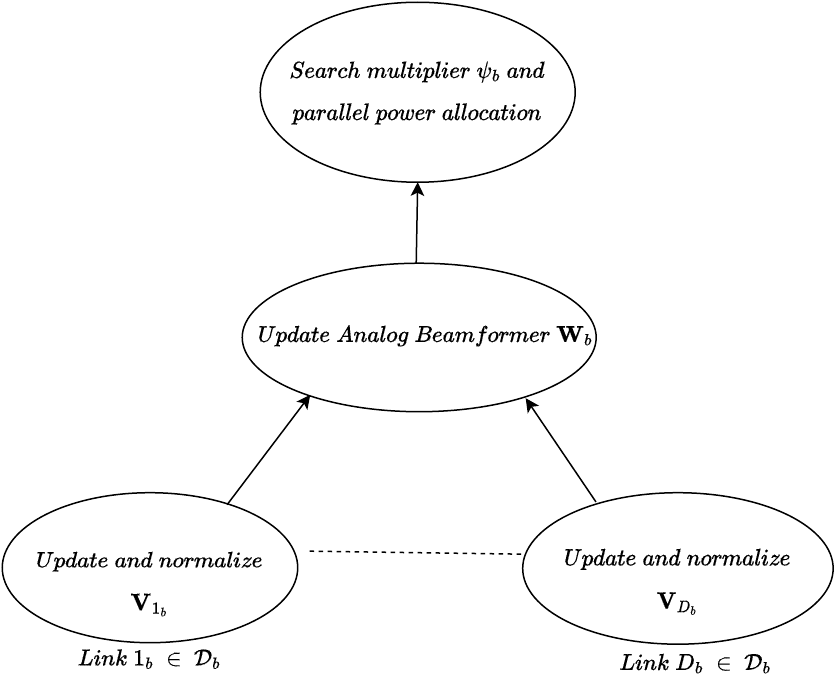}
    \caption{Decomposition of the DL WSR into three layers of sub-problems $\forall b \in \mathcal{B}$.}
    \label{decomposition_DL}
    \end{minipage}\hfill \vspace{-3mm}
\end{figure*} 
%The optimization of the unconstrained $\bmF_b$, given the recently updated local variables, can be formally stated as
\begin{equation} \label{analog_combiner_PandD}
\begin{aligned}
\underset{\substack{\bmF_b}}{\max}  \sum_{k_b \in \mathcal{U}_b}   w_{k_b}[ &  \mbox{ln det}({\bmF_b}^H \overline{\bmR}_{k_b}^a \bmF_b) -  \mbox{ln det}({\bmF_b}^H\overline{\bmR}_{\overline{k}_b}^a\bmF_b) ],
\end{aligned}
\end{equation}
where the local variables $\overline{\bmR}_{k_b}^a$ and $\overline{\bmR}_{\overline{k}_b}^a$ have been recently updated by using the information from the middle and bottom layers.
Problem \eqref{analog_combiner_PandD} is fully concave and solving it leads to the following optimal unconstrained analog combiner
\begin{equation} \label{analog_combiner_parallel}
             \bmF_b = \bmD_{N_b^{RF}}( \sum_{k_b \in \mathcal{U}_b} w_{k_b} \overline{\bmR}_{k_b}^a ,\sum_{k_b \in \mathcal{U}_b} w_{k_b} \overline{\bmR}_{\overline{k}_b}^a ).
\end{equation}
%To meet the unit-modulus and quantization constraints, we normalize its amplitudes with $\angle \cdot$ to unit-norm and quantize it as $\bmF_b(m,n) = \mathbb{Q}_b(\angle \bmF_b(m,n)) \in \mathcal{P}_b, \forall m,n$. Note that such operation results in optimality loss, which strictly depends on the phase resolution of the phase shifters and will be investigated with the simulations.
To satisfy the unit-modulus and quantization constraints, the amplitudes of $\bmF_b$ are normalized to have unit-norm and then quantized as $\bmF_b(m,n) = \mathbb{Q}_b(\angle \bmF_b(m,n)) \in \mathcal{P}_b$, for all $m$ and $n$. Such operation introduces a loss of optimality, which is directly influenced by the phase resolution of the phase shifters and will be examined through simulations.

\subsection{Per-Link Independent Sub-Problems in DL}
 
The decomposition of the DL WSR poses greater challenges due to the interdependence of the sum-power constraint among the DL users in set $\mathcal{D}_b, \forall b$. Additionally, the analog beamformer $\bmW_b$ is shared among DL users within the same cell and impacts the overall transmit power. To introduce per-link independent decomposition in DL, we assume that each FD BS $b \in \mathcal{B}$ first updates the digital beamformers for the DL users with unit-norm columns, while keeping the Lagrange multiplier $\psi_b$ and the analog beamformer $\bmW_b$ fixed. Furthermore, the powers are included afterwards, while searching the common multiplier $\psi_b$. Given this assumption, the DL WSR problem, for each FD BS $b \in \mathcal{B}$, decomposes into three layers of sub-problems.

At the bottom layer, each FD BS has to optimize the DL beamformers $\bmV_{j_b}$, $\forall j_b$, and normalize its columns to unit-norm, in parallel.
At the middle layer, one update of the analog beamformer $\bmW_b$ is required. Finally, at the top layer, we have to search for the Lagrange multiplier $\psi_b$ satisfying the coupled sum-power constraint and update the power matrices $\mathbf{P}_{j_b}$ for the DL users in parallel $\forall j_b$. Fig. \ref{decomposition_DL} shows the decomposition of the DL WSR into three layers of sub-problems, which must be solved from the bottom to the top.

We observe that the feasibility of per-link optimization arises from the exceptional attribute of MM, which permits power allocation at the highest layer. Alternative methodologies, like WMMSE, wherein powers and beamformers are jointly devised, would not facilitate the possibility of P$\&$D-HYBF.
  
For FD BS $b \in \mathcal{B}$, the Lagrangian for the DL WSR can be written as
  
\begin{equation} \label{lag_DL_vj}
\begin{aligned}
 \mathcal{L}_b^{DL} = & \sum_{j_b \in \mathcal{D}_b} ( w_{j_b} \mbox{ln det}( \bmI + \bmV_{j_b}^H {\bmW_b}^H \mathbf{Z}_{j_b}^{1} \bmW_b \bmV_{j_b} )  \\&  -\mbox{Tr}(\bmV_{j_b}^H  {\bmW_b}^H \mathbf{Z}_{j_b}^{2} \bmW_b \bmV_{j_b}  ) + \psi_b p_b ).
\end{aligned}
\end{equation}
%Note that in contrast to the UL case \eqref{Lagrangian_uk_parallel}, \eqref{lag_DL_vj} is coupled among the DL users in $\mathcal{D}_b$.
%In \eqref{lag_DL_vj}, for the bottom layer, as $\psi_b$, $\bmW_b$, $\mathbf{Z}_{j_b}^{i}$ are fixed, optimization of the digital beamformers with unit-norm columns
%remains decoupled. To optimize $\bmV_{j_b}$ a derivative can be taken which leads to a similar KKT condition as \eqref{kkt_downlink}, but now a function of the fixed $\mathbf{Z}_{j_b}^i$ which depend only on the fixed local variables only, and can be computed as

Note that in contrast to the UL case \eqref{Lagrangian_uk_parallel} which had per-link Lagrangian $\forall k_b \in \mathcal{U}_b$, \eqref{lag_DL_vj} exhibits coupling among the downlink users in $\mathcal{D}_b$.
In equation \eqref{lag_DL_vj}, for the lower layer, when $\psi_b$, $\bmW_b$, and $\mathbf{Z}{j_b}^{i}$ are held constant, the optimization of the digital beamformers with unit-norm columns remains independent. Given the local variables $\mathbf{Z}_{j_b}^i, \forall i$, digital beamformer $\bmV_{j_b}$ can be optimized as
 \begin{equation} \label{Vj_DL_parallel} 
     \begin{aligned}
     \bmV_{j_b} = \bmD_{d_{j_b}}( {\bmW_b}^H \mathbf{Z}_{j_b}^{1} \bmW_b , {\bmW_b}^H \mathbf{Z}_{j_b}^{2} {\bmW_b} ),
     \end{aligned}
 \end{equation}
The columns of which can be normalized to have a unit norm in parallel for all $j_b$, thereby enabling the inclusion of optimal power allocation at the top layer. Upon completing the parallel update of the digital beamformers $\bmV_{j_b} \forall j_b$, the middle layer necessitates the optimization of the analog combiner $\bmW_b$ for FD BS $b \in \mathcal{B}$. In this regard, each FD BS $b \in \mathcal{B}$ must independently tackle the following optimization problem at the middle layer for the unconstrained analog beamformer
 \begin{equation} \label{DL_problem_restated_analog_BF}
\begin{aligned}
\underset{\substack{\bmW_b}}{\max} \sum_{j_b \in \mathcal{D}_b} \quad   & ( w_{j_b} \mbox{ln det}( \bmI + \bmV_{j_b}^H {\bmW_b}^H \mathbf{Z}_{j_b}^{1}\bmW_b \bmV_{j_b} )    \\&  -\mbox{Tr}(\bmV_{j_b}^H  {\bmW_b}^H \mathbf{Z}_{j_b}^{2}\bmW_b \bmV_{j_b})).
\end{aligned}
\end{equation}
Note that each FD BS has complete information about the digital beamformers optimized at the bottom layer, which must be first used to update $\overline{\bmR}_{\overline{j}_b}^{-1}$ and $\overline{\bmL}_{j_b}^{In}$ appearing in $\mathbf{Z}_{j_b}^{1}$ and $\mathbf{Z}_{j_b}^{2}$ in \eqref{DL_problem_restated_analog_BF}, respectively, $\forall j_b$.
To optimize $\bmW_b$ a derivative of \eqref{DL_problem_restated_analog_BF} can be taken, and from the KKT condition it can be easily shown that $\bmW_b$ can be optimized as
 \begin{equation} \label{analog_BF_parallel}
     \begin{aligned}
      \mbox{vec}(\bmW_b)  =  \bmD_{1}( & \hspace{-1.5mm}\sum_{j_b \in \mathcal{D}_b} \hspace{-1.5mm}(\bmV_{j_b} \bmV_{j_b}^H (\bmI + \bmV_{j_b} \bmV_{j_b}^H {\bmW_b}^H \mathbf{Z}_{j_b}^{1} \bmW_b )^{-1})^T  \\& \otimes \mathbf{Z}_{j_b}^{1},\;  \sum_{j_b \in \mathcal{D}_b} (\bmV_{j_b} \bmV_{j_b}^H )^T  \otimes \mathbf{Z}_{j_b}^{2}).
    \end{aligned}  
 \end{equation}
 The analog combiner $\bmW_b$ optimized according to \eqref{analog_BF_parallel} is unconstrained and vectorized. Therefore, we do unvec$(\mbox{vec}(\bmW_b))$ to shape it into correct dimensions, normalize the amplitude with $\angle \cdot$ and quantize it such that $\bmW_b =\mathbb{Q}(\angle \bmW_b) \in \mathcal{P}_b$.
 
 For the top layer, the optimal stream power allocation can be included while searching the multiplier $\psi_b$ to satisfy the sum-power constraint $p_b$. Assuming the multiplier $\psi_b$ to be fixed, which is captured in $\mathbf{Z}_{j_b}^{2}$, the power optimization problem $\forall j_b \in \mathcal{D}_b$ can be stated as
 \begin{equation}\label{power_DL_parallel_problem}
\begin{aligned}
        \underset{\mathbf{P}_{j_b}}{\max}\quad & w_{j_b} \mbox{ln det}( \bmI + \bmV_{j_b}^H {\bmW_b}^H \mathbf{Z}_{j_b}^{1} \bmW_b \bmV_{j_b} \bmP_{j_b} )   \\& - \mbox{Tr}(\bmV_{j_b}^H  {\bmW_b}^H \mathbf{Z}_{j_b}^{2} \bmW_b \bmV_{j_b} \bmP_{j_b}).
\end{aligned}
\end{equation}
In \eqref{power_DL_parallel_problem}, the update of power matrix $\mathbf{P}_{j_b}, \forall j_b$, remains independent and the multiplier $\psi_b$ must be updated based on the sum of the transmit covariance matrices $\mbox{Tr}(\sum_{j_b} {\bmW_b}^H \bmV_{j_b}^H \bmP_{j_b} \bmV_{j_b}\bmW_b)$, once all the power matrices $\mathbf{P}_{j_b}$ are updated in parallel. Solving $\eqref{power_DL_parallel_problem}$ in parallel $\forall j_b$ leads to the following optimal power allocation scheme
 
 \begin{equation} \label{power_DL_parallel}
     \begin{aligned}
      \bmP_{j_b} = (& w_{j_b} (\bmV_{j_b}^H \bmW_b^H \mathbf{Z}_{j_b}^{1} \bmW_b \bmV_{j_b})^{-1}   - (\bmV_{j_b}^H \bmW_b^H \mathbf{Z}_{j_b}^{2} \bmW_b \bmV_{j_b})^{-1} )^{+}.
     \end{aligned}
 \end{equation}
As a final step, the multiplier $\psi_b$ can be searched with the Bisection method similar to \eqref{lag_min_max_parallel}, and while doing so, the optimal power allocation for the DL user in the set $\mathcal{D}_b$ can be computed in parallel according to \eqref{power_DL_parallel}. 

Adopting a cooperative methodology, each FD base BS engages in the exchange of information concerning the updated digital beamformers, analog beamformer, and combiner with neighboring BSs following each synchronized iteration. This collaborative information sharing enables every FD BS to gain insight into the generated interference and facilitates the appropriate adjustment of its beamformers throughout the independent optimization process. A comprehensive procedure outlining the execution of the cooperative P\&D-HYBF is presented in Algorithm 2.
 
\begin{algorithm} \footnotesize 
\caption{Parallel and Distributed Hybrid Beamforming}\label{alg_2}
\textbf{Given:} The rate weights, CSI and multiple processors in UL and DL $\forall b$.\\
\textbf{Initialize:}\;$\bmW_b, \bmV_{j_b}, \bmU_{k_b}, \forall j_b \in \mathcal{D}_b,\forall k_b \in \mathcal{U}_b$ in each cell.\\
\textbf{Repeat until convergence}
\begin{algorithmic}
%\STATE \hspace{0.001cm} \textbf{Cooperation stage} ($\forall b \in \mathcal{B})$
\STATE \hspace{0.5cm} $\forall b \in \mathcal{B}$, share $\bmW_b,\bmF_b$ and $\bmU_{k_b},\bmV_{j_b}, \forall k_b, \forall j_b$ with the neighbouring FD BSs.\\
\STATE \hspace{0.001cm} \textbf{In parallel} $\forall b$ ($\forall k_b \in \mathcal{U}_b, \forall j_b \in \mathcal{D}_b$)
\STATE \hspace{0.5cm} Update $\overline{\bmL}_{j_b}^{In},\overline{\bmL}_{k_b}^{In}$ and $\overline{\bmL}_{j_b}^{Out}$ and $\overline{\bmL}_{k_b}^{Out}$ in \eqref{local_var} , from the memory and based on the feedback, respectively.
\STATE \hspace{0.5cm} Update $\overline{\bmR}_{\overline{k}_b}^{-1}$ and $\overline{\bmR_{\overline{j}_b}}$ in \eqref{local_covariance} from the memory.
\STATE \hspace{0.001cm} \textbf{Solve in parallel} $\forall b \in \mathcal{B}$
\STATE \hspace{0.5cm} \textbf{Parallel DL for FD BS $b$}\\ 
\STATE \hspace{0.8cm} \textbf{Set: } $\underline{\psi_b}=0,\overline{\psi_b}= \psi_b^{max}$. \\
\STATE \hspace{0.8cm} \textbf{Bottom layer}: Compute $\bmV_{j_b},\forall j_b$ in parallel with \eqref{Vj_DL_parallel} and normalize it to unit-norm columns \\
\STATE \hspace{0.8cm} Update $\overline{\bmR}_{\overline{j}_b}^{-1}$ and $\overline{\bmL}_{j_b}^{In}, \forall j_b$ from the memory\\
\STATE \hspace{0.8cm} \textbf{Middle layer:} Compute $\bmW_b$ with \eqref{analog_BF_parallel}, do unvec and get $\angle \bmW_b$\\
\STATE \hspace{0.8cm} \textbf{Top layer:} \textbf{Repeat until convergence}\\
\STATE \hspace{1.3cm} set $\psi_b = (\underline{\psi_b} + \overline{\psi_b})/2$  \\
\STATE \hspace{1.3cm} \textbf{In parallel} $\forall j_b$
\STATE \hspace{1.7cm} Compute $\bmP_{j_b}$ \eqref{power_DL_parallel}, do SVD, set $\bmP_{j_b}=\bmD_{\bmP_{j_b}}$\\
\STATE \hspace{1.7cm} Set $\bmQ_{j_b} = \bmW_b \bmV_{j_b} \bmP_{j_b} \bmV_{j_b}^H \bmW_b^H $
\STATE \hspace{1.3cm} \textbf{if} constraint for $\psi_b$ is violated\\
\STATE \hspace{1.7cm} set $\underline{\psi_b} = \psi_b$ , \\
\STATE \hspace{1.3cm} \textbf{else} $\overline{\psi_b} = \psi_b$\\
\STATE \hspace{0.5cm} \textbf{Parallel UL for FD BS $b$}\\
\STATE \hspace{1.2cm} \textbf{Set: }  $\underline{\lambda_{k_b}}=0,\overline{\lambda_{k_b}}= \lambda_{k_b}^{max},\; \forall k_b$.
\STATE \hspace{1.2cm}  \textbf{Bottom layer:} Compute $\bmU_{k_b}, \forall k_b,$ \eqref{UL_BF_parallel} in parallel\\
\STATE \hspace{1.2cm} \quad  \quad  \quad  \quad  \quad   \quad  \; Normalize $\bmU_{k_b}, \forall k_b$.
\STATE \hspace{1.2cm} \textbf{Middle layer:} \textbf{Repeat until convergence in parallel $\forall k_b$}\\
\STATE \hspace{1.7cm} set $\lambda_{k_b} = (\underline{\lambda_{k_b}} + \overline{\lambda_{k_b}})/2$  \\ 
\STATE \hspace{1.7cm} Compute $\mathbf{P}_{k_b}$ with \eqref{optimal_pow_UL_per_link}, do SVD and set $\mathbf{P}_{k_b} = \bmD_{P_{k_b}}$\\ 
\STATE \hspace{1.7cm} Set $\bmT_{_b} = \bmU_{k_b} \mathbf{P}_{k_b} \bmU_{k_b}^H$\\
\STATE \hspace{1.7cm} \textbf{if} constraint for $\lambda_{k_b}$ is violated\\
\STATE \hspace{2cm} set $\underline{\lambda_{k_b}}= \lambda_{k_b}$ \\
\STATE \hspace{1.7cm} \textbf{else} $\overline{\lambda_{k_b}} = \lambda_{k_b}$ \\
\STATE \hspace{1.2cm} Update $\overline{\bmR}_{k_b}^a,\overline{\bmR}_{\overline{k}_b}^a$, $\forall k_b$.
\STATE \hspace{1.2cm} \textbf{Top layer:} Compute $\bmF_b$ with \eqref{analog_combiner_parallel} and get $\angle \bmF_b$. \\
\STATE \hspace{0.1cm} \textbf{Repeat}
\end{algorithmic}  
\textbf{Quantize} $\bmW_b$ and $\bmF_b,$ with $\mathbb{Q}_b(\cdot), \forall b$.
\label{algo2}  
\end{algorithm}

\emph{Remark 2:} Remark that the required information exchange is minimal as each FD BS only needs to share one analog beamformer and analog combiner, which have sizes corresponding to the number of antennas at the BS, along with several small-sized digital beamformers (sized according to the number of RF chains for DL or the number of UL user antennas, which is very small). Additionally, P$\&$D-HYBF relies on local interfering channels (Assumption 2), which are incorporated in the local variables, resulting in a significant reduction in communication overhead compared to C-HYBF. Furthermore, as computations take place locally at each FD BS, it allows faster adaptation of the beamformers to highly dynamically variable local traffic demands.
\vspace{-3mm}
\subsection{On the Convergence of P$\&$D-HYBF} 
The convergence of P$\&$D-HYBF can be shown by following a similar approach for C-HYBF and therefore we consider omitting it. However, there are some notable differences in the information captured by the local variables used in P$\&$D-HYBF compared to the gradients used in C-HYBF. This results in different KKT conditions and therefore both algorithms converge to a different local optimum during the alternating optimization process, resulting in different achievable WSR at each iteration. It will be shown later that the difference between the achievable WSR by C-HYBF and P$\&$D-HYBF at the convergence is negligible.

\vspace{-2mm}
\subsection{Computational Complexity Analysis} \label{analisi_complessita}
In this section, we present the per-iteration computational complexity for the C-HYBF and P$\&$D-HYBF schemes.
For such purpose, an equal number of users in DL and UL in each cell, i.e., $D_b=D$ and $U_b =U$, $\forall b \in \mathcal{B}$ with $|\mathcal{B}| = B$, is assumed. Moreover, the number of antennas in each cell for the FD BSs, UL and DL users is also assumed to be the same in each cell.  

Let us consider only the computational complexity of C-HYBF denoted with $\mathcal{C}_{C}$, by ignoring its huge communication overhead to transfer complete CSI every channel coherence time to the CN and communicating back the optimized variables, given by

%It is one iteration consists in updating $BD$ and $BU$ digital beamformers for the DL and UL users, respectively, and $B$ analog beamformers and $B$ analog combiners by the CN. Optimization of these many
%variables result in the following computational complexity
\begin{equation} \label{compl_C_HYBF}
        \begin{aligned}
            \mathcal{C}_{C} = & \mathcal{O}(B^2 U^2  {N_b^{RF}}^3 + B^2 U D N_{j_b}^3 + B^2 D^2 N_{j_b}^3  + B^2 D U {N_b^{RF}}^3 \\& + B {M_b^{RF}}^2 M_b^2 + B {N_b^{RF}} N_b^2   + B D d_{j_b} {M_b^{RF}}^2 + B D d_{k_b} N_{k_b}^2).
        \end{aligned}
    \end{equation}

For P$\&$D-HYBF, different computational processors have different computational burdens. Therefore we consider comparing its worst-case complexity in each cell, denoted as $\mathcal{C}_{P\&D}^{DL}$ and $\mathcal{C}_{P\&D}^{UL}$ for DL and UL, respectively. Namely, let $Z_{DL}$ and $Z_{UL}$ denote the number of processors dedicated for DL and UL by each FD BS, respectively. It is necessary to distinguish two cases: 
1) $Z_{DL} =D$, $Z_{UL}=U$, and 2) $Z_{DL} < D$, $Z_{UL} < U, \forall b$. The first case considers fully parallel implementation with the number of processors equals to the number of users. For such a case, the worst-case complexity in UL and DL is given for the processors which make one update of the digital beamformer in UL and the analog combiner, and one update of the digital beamformer in DL and the analog beamformer, respectively, in each cell. For the fully parallel implementation (case 1), the complexity results to be
 
\begin{subequations}
    \begin{equation}
    \begin{aligned}
         \mathcal{C}_{P\&D}^{DL} = \mathcal{O}(B D N_{j_b}^3 + B U {N_b^{RF}}^3 + d_{j_b} {M_b^{RF}}^2  + {M_b^{RF}}^2  M_b^2),
    \end{aligned}
    \end{equation}
    \begin{equation}
         \mathcal{C}_{P\&D}^{UL} = \mathcal{O}( B U {N_b^{RF}}^3  + B D N_{j_b}^3 + d_{k_b} N_{k_b}^2 + {N_b^{RF}} N_b^2)
    \end{equation}
\end{subequations} 

The second case considers that the number of processors dedicated is less than the number of users. In such a case, each processor may have to update $K$ and $N$ digital beamformers in DL and UL, respectively, before updating the analog part. In such a case, the worst-case complexity in DL and UL is given for the processors which update $K$ digital beamformers and the analog beamformer, and $N$ digital beamformers and the analog combiner, respectively, in each cell. 
In such a case, the worst-case complexity in UL and DL for P$\&$D-HYBF is

\begin{subequations}
    \begin{equation}
        \begin{aligned}
          \mathcal{C}_{P\&D}^{DL} = \mathcal{O}(K B D N_{j_b}^3 + K B U {N_b^{RF}}^3 + K d_{j_b} {M_b^{RF}}^2  + {M_b^{RF}}^2  M_b^2)
        \end{aligned}
    \end{equation}
    \begin{equation}
       \mathcal{C}_{P\&D}^{UL}= \mathcal{O}(N B U {N_b^{RF}}^3  + N B D N_{j_b}^3 + N d_{k_b} N_{k_b}^2 + {N_b^{RF}} N_b^2)
    \end{equation}
\end{subequations}
%We can see that the worst-case complexity of P$\&$D-HYBF is significantly less compared to C-HYBF, scaling only linearly as a function of network size $B$ and density, i.e., $D$ and $U$. On the other hand, C-HYBF not only requires significant computational complexity, but it also scales quadratically with the network size and density. Note that C-HYBF requires significant additional overhead to transfer full CSI to the CN, which is ignored in the computational. Furthermore, higher computational complexity also leads to higher power consumption at the CN for processing \cite{schneider2012could}. On the other hand, P$\&$D-HYBF can enable the deployment of low-cost multi-processor FD BSs, and such processors will require minimal power consumption for solving the per-link independent sub-problems. 

It is noteworthy that the worst-case complexity of P$\&$D-HYBF is significantly lower than that of C-HYBF. It scales linearly with the network size ($B$) and density ($D$ and $U$), while C-HYBF has a quadratic scaling. %Additionally, C-HYBF requires substantial overhead for transferring complete CSI to the CN, which is not accounted for in the computational complexity.  Consequently, P$\&$D-HYBF enables the deployment of cost-effective multi-processor FD BSs for solving local per-link independent low-complexity sub-problems.

 \section{Simulation Results} \label{simulazioni}
This section presents simulation results to evaluate the performance of the proposed C-HYBF and P$\&$D-HYBF schemes. For comparison, we consider the following benchmark schemes:
1) A centralized \emph{Fully Digital FD} scheme with the LDR noise. 2) A centralized \emph{Fully Digital HD} scheme with LDR noise, serving the UL and DL users by separating the resources in times.

To compare the performance with a fully digital HD system, we define the additional gain in terms of percentage for an FD system over an HD system as
\begin{equation}
    Gain = \frac{\mbox{WSR}_{FD} - \mbox{WSR}_{HD}}{\mbox{WSR}_{HD}} \times 100 [\%],
\end{equation}
where $\mbox{WSR}_{FD}$ and $\mbox{WSR}_{HD}$ are the network WSR for the FD and HD systems, respectively. We assume the same SNR level for all the FD BSs, defined as $\mbox{SNR} = p_b/\sigma_b^2$,
with transmit power $p_b$ and thermal noise variance $\sigma_b^2$.
We assume that the UL users and FD BSs transmit the same amount of power, i.e., $p_{k_b}=p_b, \forall k_b$. The thermal noise level for DL users is set as $\sigma_{j_b}^2 = \sigma_b^2, \forall j_b$. The total transmit power is normalized to $1$, and we choose the thermal noise variance to meet the desired SNR.
We simulate a multicell network of $B=2$ cells, with each FD BS serving one DL and one UL user.
P$\&$D-HYBF is evaluated on a computer consisting of $4$ computational processors, equal to the number of users in the network, i.e., fully parallel implementation, by using the parfor function in \MATLAB. The FD BSs are assumed to have $M_b =100$ transmit and $N_b = 60$ receive antennas, and the number of RF chains in transmission and reception is chosen as $M_b^{RF}\hspace{-2mm}=N_b^{RF}=32,16,10$ or $8$. The phase shifters are assumed to be quantized with a uniform quantizer $\mathbb{Q}_b(\cdot)$ of $10$ or $4$ bits. The DL and UL users are assumed to have  $N_{j_b}=N_{k_b}=5$ antennas and are served with $d_{j_b}=d_{k_b}=2$ data streams. The number of paths for each device is chosen to be $N_{k_b}^p=3$, and the AOA $\theta_{k_b}^n$ and AOD $\phi_{k_b}^n$ are assumed to be uniformly distributed in the interval $\mathcal{U}\sim [-30^\circ,30^\circ], \forall j_b, k_b$. We assume uniform-linear arrays (ULAs) for the FD BSs and users. For the FD BSs, the transmit and the receive array are assumed to be separated with distance $D_b=20~$cm with a relative angle $\Theta_b = 90^\circ$ and $r_{m,n}$ in \eqref{SI_Channel} is set given $D_b$ and $\Theta_b$. The Rician factor is chosen to be $\kappa_b=1$ and the rate weights are set to be $w_{k_b}=w_{j_b}=1$. Digital beamformers are initialized as the dominant eigenvectors of the channel covariance matrices of each user. The analog beamformers and combiners are initialized as the dominant eigenvectors of the sum of the channel covariance matrices across all the DL and UL users, respectively. Results reported herein are averaged over $100$ channel realizations. %Note that as we are assuming perfect CSI, the SI with HYBF can be cancelled only up to the LDR noise floor, which reflects the residual SI power.

 %\begin{table}
%\centering
  %  \caption{Simulation parameters to simulate a multicell mMIMO mmWave FD system.}
  %  \resizebox{9cm}{!}{%
  %  \begin{tabular}{|p{40mm}|p{40mm}|p{40mm}|}
  %   \hline
   %    \multicolumn{3}{|c|}{ Parameters } \\
   %    \hline
   %    Cells &\mbox{$B$}  & 2 \\ \hline
   %   UL and DL users &\mbox{$ U_b,D_b$}  & 1, $\forall b$ \\ \hline
    %  Data streams   &\mbox{$d_{j_b}$},\mbox{$d_{k_b}$} & 2, $\forall b$   \\ \hline
   %    BSs antennas  &\mbox{$M_{b}, N_b$} &  100, 60  \\ \hline
   %    Clusters and Paths  &\mbox{$N_{c,b}$},\mbox{$N_{p,b}$} & 3,3  \\ \hline
      % RF chains (BSs)  &\mbox{$ M_{b}^{RF} = N_{b}^{RF}$} & 10,12,16,32  \\  \hline
 %    Rx RF chains  &\mbox{$ N_r$} & 10,12,16,32  \\ \hline
   %    User antennas &\mbox{$M_{k_b} = N_{j_b}$}  &  5  \\  \hline
   %    DL user antennas  &\mbox{$N_j$} &  5  \\\hline
    %  Rician Factor   &\mbox{$\kappa_b$} & 1 \\ \hline
     % Array response   &$\mathbf{a}_{r}(\cdot)$  $\mathbf{a}_{t}(\cdot)$ &  ULA,ULA,ULA \\\hline
      % AoA,AoD  & \mbox{$\phi_{k_b}^n$},\mbox{$\phi_{k_b}^n$}  &  $\mathcal{U}$\mbox{$\sim [-30^{\circ},30^{\circ}]$} \\ \hline
       % Rate weights &$w_k,w_j$ & 1\\ \hline
        %Sum Power &$ p_{k_b},p_b$ & 1,1 \\ \hline
      % Uniform Quantizer &$\mathbb{Q}_b(\cdot)$ &  $4$ or $10$ bits\\\hline
      % Relative Angle   &$\Theta_b$ & $90^{\circ}$\\\hline
    % Array separation   &  $D_b$ & 20 cm\\\hline
    %\end{tabular}}\label{table_parametri}
%\end{table} 
 
 \begin{figure}[t]
     \centering
\includegraphics[width=7cm,height=4.5cm,draft=false]{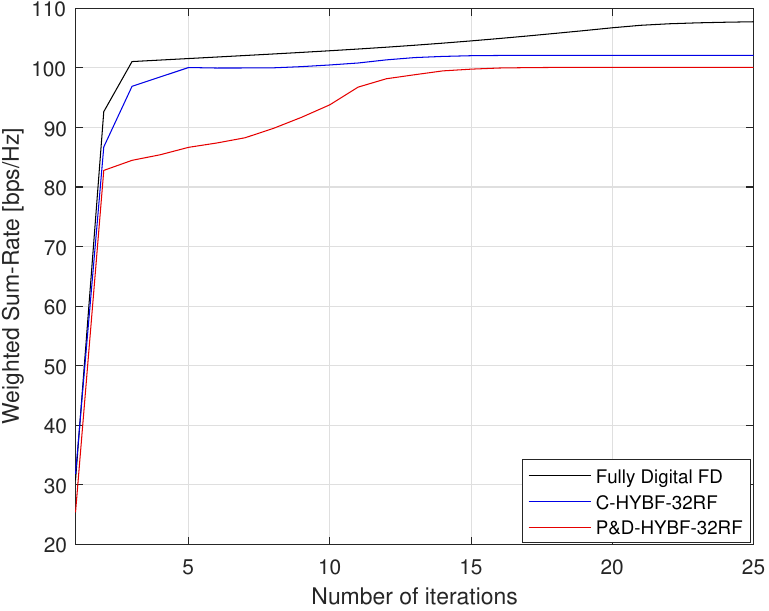}
     \caption{Typical convergence behaviour of the proposed schemes.}
     \label{convergence} \vspace{-7mm}
 \end{figure}

Figure \ref{convergence} showcases the convergence characteristics of the proposed schemes in comparison to the fully digital FD system. It is apparent that the proposed schemes achieve varying rates of improvement in WSR at each iteration and converge towards distinct local optima. However, the discrepancy in the attained WSR at convergence is negligible. Notably, P$\&$D-HYBF demonstrates rapid convergence within a few iterations, thereby requiring minimal communication overhead for sharing updated variable information.
%Figure \ref{tempo} showcases the execution time of C-HYBF and P$\&$D-HYBF, considering 32 RF chains. C-HYBF requires significant computational time as it updates variables iteratively through alternating optimization one after another. The transfer of complete CSI to the CN and communication of optimized variables to all cells further adds substantial time, disregarded in the simulations. In contrast, P$\&$D-HYBF enables parallel computation of local variables at each FD BS, wherein each BS optimize its local beamformers only for its associated users on multiple processors. The results demonstrate that P$\&$D-HYBF requires approximately 1/21 and 1/2.3 less time in UL and DL, respectively, compared to the average execution time of C-HYBF.
%The complexity of P$\&$D-HYBF in DL is dominated by the computation of a single large generalized eigenvalue problem (GDE) to update the vectorized analog beamformer, with a complexity of $\mathcal{O}({M_b^{RF}}^2 M_b^2)$. In UL, the complexity is primarily determined by the computation of the analog combiner, which scales as $\mathcal{O}({N_b^{RF}} N_b^2)$. Note that the execution time for solving one sub-problem for the bottom layers in UL and DL is negligible compared to the average execution time of C-HYBF.  Based on the complexity analysis above, we anticipate that the execution time of C-HYBF will grow quadratically with the number of users or cells. In contrast, P$\&$D-HYBF exhibits significantly smaller execution time, expected to increase linearly as the network expands.
 
Figure \ref{tempo} illustrates the execution time of C-HYBF and P$\&$D-HYBF, considering the utilization of 32 RF chains. C-HYBF requires substantial computational time as it iteratively updates variables through alternating optimization, one after another. The transfer of complete CSI to the CN and the communication of optimized variables to all cells further contribute to a significant time overhead, which is disregarded in the simulations. On the other hand, P$\&$D-HYBF allows for parallel computation of local variables at each FD BS, where each BS optimizes its local beamformers solely for its associated users using multiple low-cost processors. The results demonstrate that P$\&$D-HYBF requires approximately $1/21$ and $1/2.3$ less time in the UL and DL, respectively, compared to the average execution time of C-HYBF.

The complexity of P$\&$D-HYBF in the DL is dominated by the computation of a single large GDE to update the vectorized analog beamformer, with a complexity of $\mathcal{O}({M_b^{RF}}^2 M_b^2)$. In the UL, the complexity primarily stems from the computation of the analog combiner, which scales as $\mathcal{O}({N_b^{RF}} N_b^2)$. It is important to note that the execution time for solving one sub-problem for the bottom layers in the UL and DL is negligible compared to the average execution time of C-HYBF. Based on the aforementioned complexity analysis, we anticipate that the execution time of C-HYBF will grow quadratically with the number of users or cells which is infeasible for a large real-time FD network.
 
 \begin{figure*}[t]
    \centering
 \begin{minipage}{0.5\textwidth}
  \centering
\includegraphics[width=7cm,height=4.5cm,draft=false]{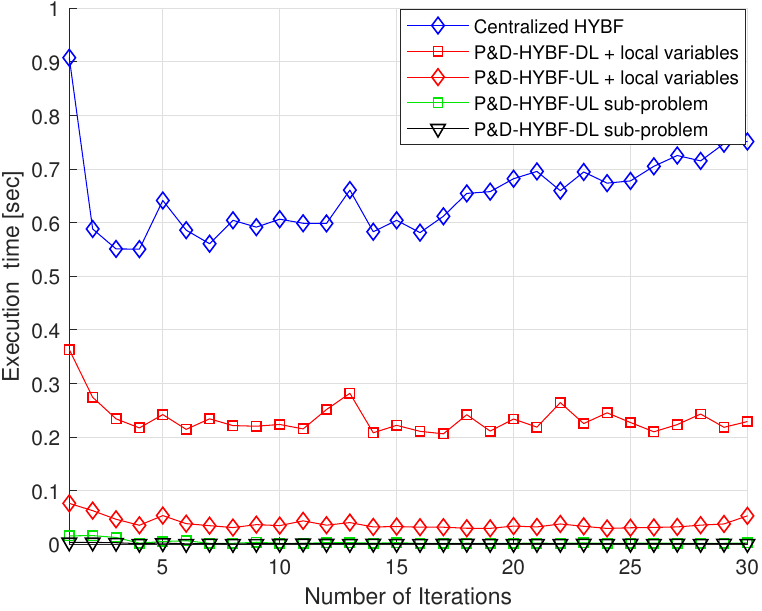}
     \caption{Execution time for the C-HYBF and the P$\&$D HYBF schemes with $32$-RF chains.}
     \label{tempo}
\end{minipage} \hfill
      \begin{minipage}{0.49\textwidth}
      \centering
\includegraphics[width=7cm,height=4.5cm,draft=false]{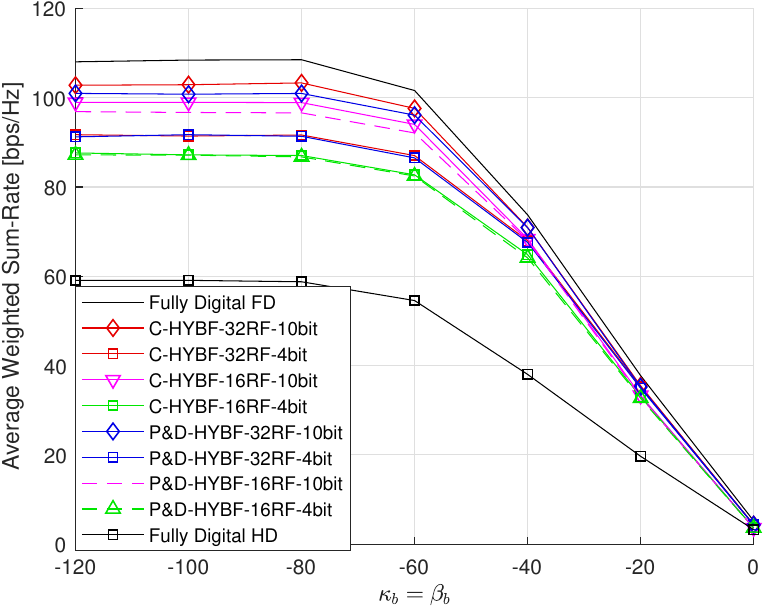}
    \caption{Average WSR as a function of the LDR noise with SNR$=20~$dB and a large number of RF chains. }
    \label{SNR_20dB_32RF}
    \end{minipage}  \vspace{-4mm}
\end{figure*}

\begin{figure}[t]
     \centering
\includegraphics[width=7cm,height=4.5cm,draft=false]{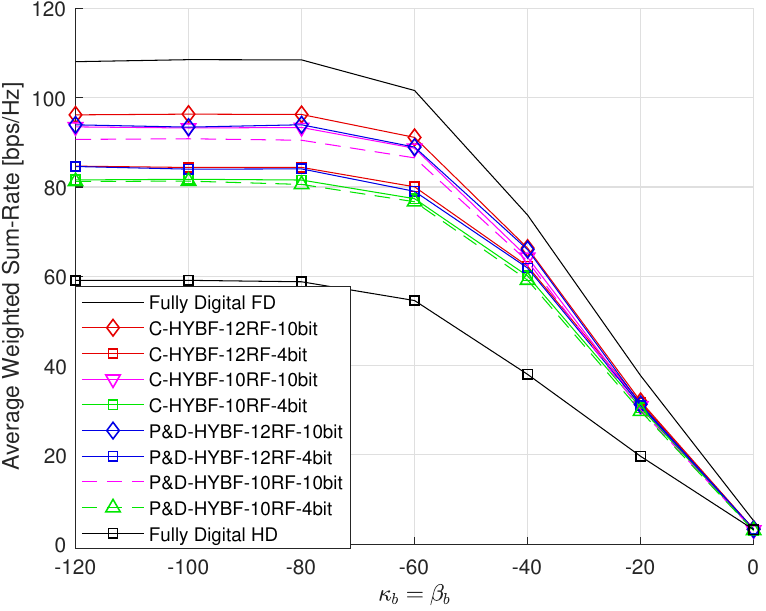}
    \caption{Average WSR as a function of the LDR noise with SNR$=20~$dB and a small number of RF chains.}
    \label{SNR_20dB_12RF} \vspace{-5mm}
\end{figure}

  \begin{figure*}[t]
    \centering
 \begin{minipage}{0.49\textwidth}
     \centering
\includegraphics[width=7cm,height=4.5cm,draft=false]{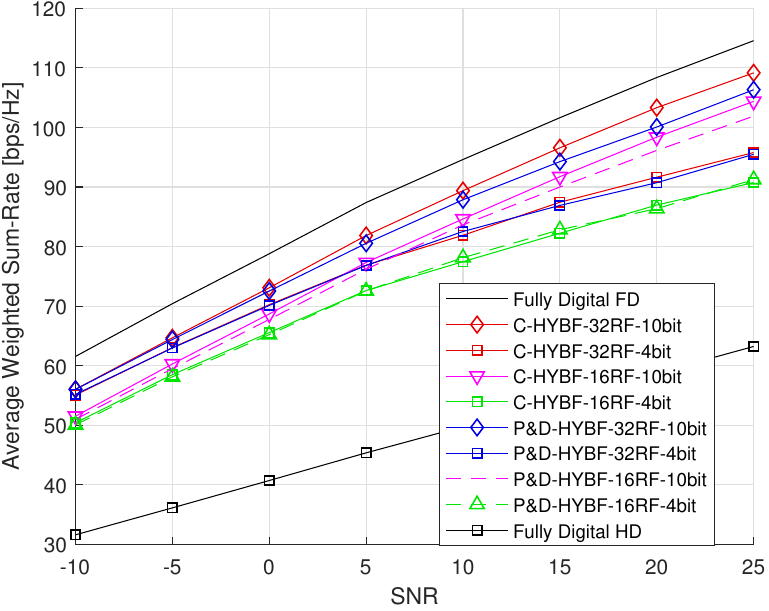}
    \caption{Average WSR as a function of the SNR with LDR noise $\kappa_{k_b}=-80~$dB and a large number of RF chains.}
    \label{SNR_LDR_08_32RF_16RF}
\end{minipage}  
      \begin{minipage}{0.49\textwidth}
     \centering
\includegraphics[width=7cm,height=4.5cm,draft=false]{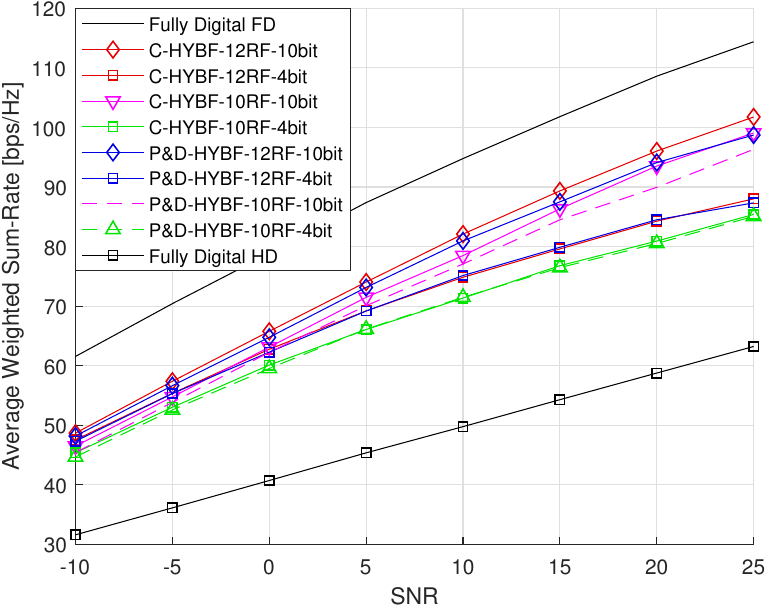}
    \caption{Average WSR as a function of the SNR with LDR noise $\kappa_{k_b}=-80~$dB and a small number of RF chains. }
    \label{SNR_LDR_08_12RF_10RF}
    \end{minipage}  
\end{figure*}

 Fig. \ref{SNR_20dB_32RF} shows the average WSR achieved with both schemes as a function of the LDR noise with $32$ or $16$ RF chains and $10$ or $4$ bits phase-resolution. It is visible that the P$\&$D-HYBF performs very close to the C-HYBF scheme with the same number of RF chains and phase resolution. Fully digital FD achieves $\sim 83 \%$ of additional gain than the fully digital HD for any LDR noise level. For a low LDR noise level $k_b < -80$ dB, C-HYBF and P$\&$D-HYBF with $32$ RF chains achieve $\sim74,55$ and $\sim 71,54\%$ additional gain and with $16$ RF chains they achieve $\sim 67\%, 48\%$ and $\sim 64\%, 47\%$ additional gain with $10,4$ bits phase-resolution, respectively. We can also see that when the LDR noise variance increases, the achievable WSR for both the FD and HD systems decreases considerably. Fig. \ref{SNR_20dB_12RF} shows the average WSR as a function of the LDR noise with only $12$ or $10$ RF chains and with $10$ or $4$ bits phase-resolution. For LDR noise $k_b \leq 80~$dB, C-HYBF and P$\&$D-HYBF with $12$ RF chains achieve $\sim 60, 43 \%$ and $\sim57, 43 \%$  additional gain and with $10$ RF chains they achieve additional gain of $\sim58, 38 \%$
 and $\sim 53, 37\%$ with $10,4$ bit phase-resolution, respectively. %From  \ref{SNR_20dB_32RF}-\ref{SNR_20dB_12RF} we can conclude the LDR noise generated from the small dynamic range hardware dictates the maximum achievable WSR with HYBF for both the schemes for mmWave FD, dictating the effective SINR. It is to be noted that P$\&$D-HYBF achieves similar performance as the C-HYBF scheme regardless of any LDR noise level.

 \begin{figure*}[t]
    \centering
 \begin{minipage}{0.49\textwidth}
     \centering
\includegraphics[width=7cm,height=4.5cm,draft=false]{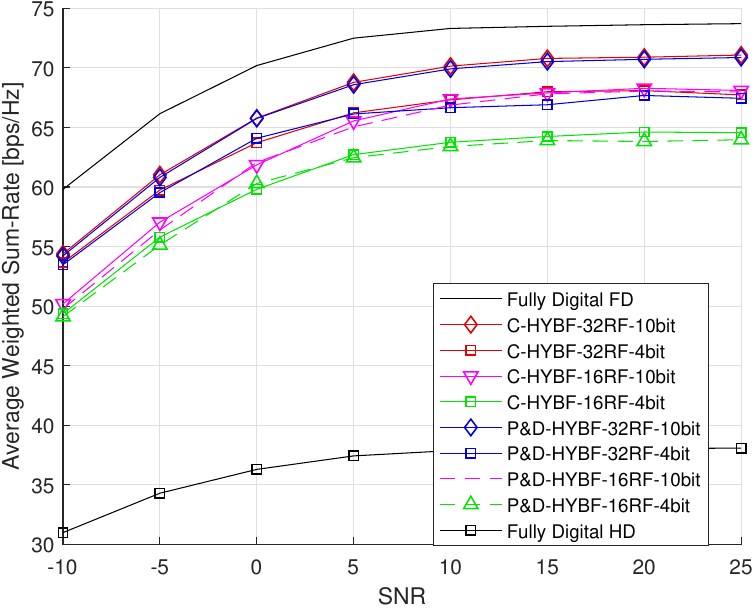}
    \caption{Average WSR as function of the SNR with LDR noise $\kappa_{k_b}=-40~$dB and a large number of RF chains. }
    \label{SNR_LDR_32_16RF_10RF}
\end{minipage}  
      \begin{minipage}{0.49\textwidth}
      \centering \includegraphics[width=7cm,height=4.5cm,draft=false]{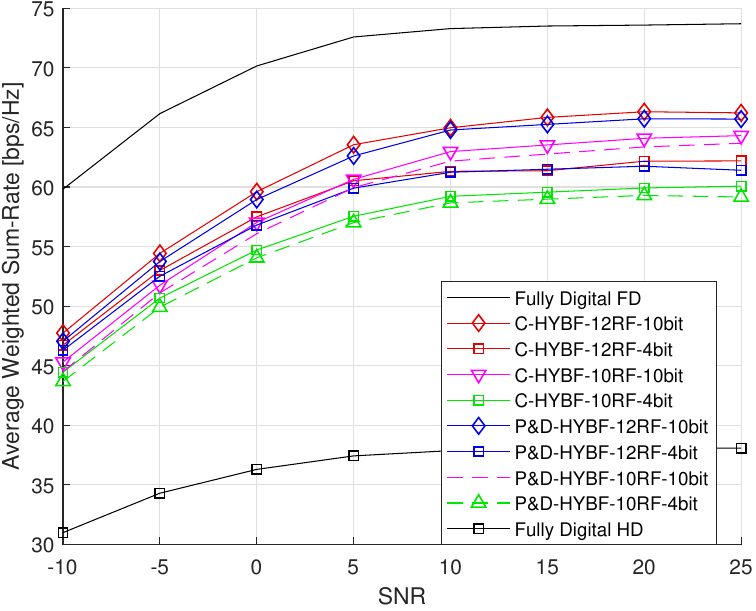}
    \caption{Average WSR as a function of the SNR with LDR noise $\kappa_{k_b}=-40~$dB and a small number of RF chains. }
    \label{SNR_LDR_04_12RF_10RF}
    \end{minipage}  
\end{figure*}

Fig. \ref{SNR_LDR_08_32RF_16RF} shows the average WSR as a function of the SNR with $32$ and $16~$RF chains and with $10$ or $4$ bit phase-resolution affected with LDR noise $k_b=-80~$dB, in comparison with the benchmark schemes. We can see that a fully digital FD system achieves $\sim 94\%$ and  $\sim 82\%$ additional gain at low and high SNR, respectively. With $32$ RF chains and $10$ bit phase-resolution, the C-HYBF scheme achieves $\sim 79 \%$ gain at all the SNR levels and the P$\&$D-HYBF achieves $\sim 77\%$ and $\sim 68\%$ gain at low and high SNR, respectively. As the phase resolution decreases to $4$-bits, we can see that the loss in WSR compared to the $10$-bit phase-resolution case is much more evident at high SNR. Still, with $16$ RF chains and $10$ or $4$ bit phase-resolution, both schemes significantly outperform the fully digital HD scheme for any SNR level. 

Fig. \ref{SNR_LDR_08_12RF_10RF} shows the average WSR as a function of the SNR with the same LDR noise level as in Fig. \ref{SNR_LDR_08_32RF_16RF}, i.e., $k_b=-80~$dB, but with $10$ or $12$ RF chains and $10$ or $4$ bit phase-resolution. The achieved average WSR presents a similar behaviour as in the case of a high number of RF chains, and it is visible that the proposed schemes significantly outperform the fully digital HD system also with a very low number of RF chains and phase-resolution. Moreover, P$\&$D-HYBF achieves similar performance as the C-HYBF scheme regardless of the phase resolution and number of RF chains.

Figure \ref{SNR_LDR_32_16RF_10RF} exhibits the average achieved WSR as a function of the SNR in the presence of LDR noise $k_b=-40$ dB. It is evident that when the LDR noise dominates, reducing the variance of thermal noise has a negligible impact on the effective signal-to-LDR-plus-thermal-noise ratio. Consequently, the dominance of LDR noise variance acts as a threshold, limiting the effective SLNR and thereby restricting the achievable WSR, which tends to saturate at SNR$=10$ dB. Notably, even with a high level of LDR noise, both the C-HYBF and P$\&$D-HYBF achieve similar performance gains. It can be also observed that both schemes achieve higher WSR when utilizing $16$ RF chains and $10$-bit phase resolution compared to the case of $32$ RF chains and $4$-bit phase resolution. Figure \ref{SNR_LDR_04_12RF_10RF} illustrates the average WSR as a function of the SNR, considering only $10$ or $12$ RF chains and $10$ or $4$-bit phase resolution. 

The aforementioned findings serve as compelling evidence for the superior performance of both proposed schemes compared to the fully digital HD system. Furthermore, an important conclusion can be drawn, namely that P$\&$D-HYBF exhibits comparable performance to C-HYBF without any degradation while offering numerous advantages in terms of scalability, computational complexity, deployability and communication overhead. As a result, P$\&$D-HYBF emerges as a prominent solution for future large-scale mmWave FD networks.

\section{Conclusion} \label{conclusioni}
This article presented two HYBF schemes for WSR maximization in multicell mmWave mMIMO FD systems. Firstly, a C-HYBF scheme based on alternating optimization is presented. However, C-HYBF requires massive communication overhead to exchange information between the FD network and the CN. Moreover, very high computational power is required to optimize numerous variables jointly. To overcome these drawbacks, a very low-complexity P$\&$D-HYBF design is proposed, which enables each FD BS to solve its local per-link independent sub-problems simultaneously on different computational processors, which drastically reduces the communication overhead. Its complexity scales only linearly as a function of the network size, making it highly scalable and enabling the deployment of low-cost computational processors at each BS.
Simulation results show that the proposed HYBF designs achieve similar average WSR and significantly outperform the centralized fully digital HD systems with only a few RF chains.

\appendices

\section{Proof of Theorem \ref{digital_BF_solution} }\label{AP.1}
%\label{hybrid_appendix}
%The proof of Theorem \ref{digital_BF_solution} follows similarly as in Appendix B \cite{sheemar2021practical_HYBF}. 
We firstly proceed with the proof by considering the minorized WSR for the digital beamformer $\bmV_{j_b}$, and the proof for the digital beamformer $\bmU_{k_b}$ follows similarly. For the digital beamformer $\bmV_{j_b}$, when the remaining variables are fixed, the following optimization problem can be considered

\begin{equation}
\begin{aligned}
 \underset{\substack{\bmV_{j_b}}}{\max}  \quad & [ w_{j_b} \mbox{ln det}( \bmI +  \bmV_{j_b}^H {\bmW_b}^H \mathbf{\Sigma}_{j_b}^{1} {\bmW_b} \bmV_{j_b}) \\&  -\mbox{Tr}(\bmV_{j_b}^H {\bmW_b}^H  \mathbf{\Sigma}_{j_b}^{2} {\bmW_b} \bmV_{j_b})].
\end{aligned} \label{eq_restate_proof}
\end{equation}
To prove the result stated in Theorem \ref{digital_BF_solution},  \eqref{eq_restate_proof} must be simplified such that the Hadamard's inequality applies as in Proposition 1 \cite{kim2011optimal} or Theorem 1 \cite{hoang2008noncooperative}. The Cholesky
decomposition of the matrix $({\bmW_b}^H \mathbf{\Sigma}_{j_b}^{2} \bmW_b)$ can be written as $\bmL_{j_b} \bmL_{j_b}^H$ where $\bmL_{j_b} $ is the lower triangular Cholesky factor. We can define $\widetilde{\mathbf{V}_{j_b}} = \bmL_{j_b}^H \mathbf{V}_{j_b}$, which allows to write \eqref{eq_restate_proof} as 

\begin{equation} \label{def_1_vj}
\begin{aligned}
 \underset{\substack{\bmV_{j_b}}}{\max}  \quad & [ w_{j_b} \mbox{ln det}( \bmI +  \widetilde{\mathbf{V}_{j_b}}^H \bmL_{j_b}^{-1}   {\bmW_b}^H \mathbf{\Sigma}_{j_b}^{1} {\bmW_b} \bmL_{j_b}^{-H} \widetilde{\mathbf{V}_{j_b}})  \\& -\mbox{Tr}(\widetilde{\mathbf{V}_{j_b}}^H \widetilde{\mathbf{V}_{j_b}} )].
\end{aligned} 
\end{equation}

Let $\bmA_{j_b} \mathbf{\Lambda}_{j_b} \bmA_{j_b}^H $ be the eigen decomposition of $\bmL_{j_b}^{-1}   {\bmW_b}^H \mathbf{\Sigma}_{j_b}^{1} {\bmW_b} \bmL_{j_b}^{-H}$, where  $\bmA_{j_b} $ and $\mathbf{\Lambda}_{j_b}$ are the unitary and diagonal power matrices. Let $\bmD_{j_b} = \bmA_{j_b}^H  \widetilde{\mathbf{V}_{j_b}}^H \widetilde{\mathbf{V}_{j_b}} \bmA_{j_b} $, and the problem \eqref{def_1_vj} can be restated as

\begin{equation}  
\begin{aligned}
 \underset{\substack{\bmD_{j_b}}}{\max}  \quad & [ w_{j_b} \mbox{ln det}( \bmI +   \bmD_{j_b}  \mathbf{\Lambda}_{j_b})  -\mbox{Tr}(\bmD_{j_b} )].
\end{aligned} \label{eq_restate_proof}
\end{equation}

By using Hadamard’s inequality [Page 233 \cite{cover1999elements}], it can be easily shown that the optimal $\bm{D}_{j_b}$ must be diagonal. Let $\bmV_{j_b} = \bmL_{j_b}^{-H} \bmA_{j_b} \bmD_{j_b}^{\frac{1}{2}}$ and we can write 
 
\begin{equation}
    {\bmW_b}^H \mathbf{\Sigma}_{j_b}^{1} {\bmW_b} = \bmL_{j_b} \bmL_{j_b}^H \bmL_{j_b}^{-H} \bmA_{j_b} \bmD_{j_b}^{\frac{1}{2}} \mathbf{\Lambda}_{j_b} =  {\bmW_b}^H  \mathbf{\Sigma}_{j_b}^{2} {\bmW_b} \bmV_{j_b} \mathbf{\Lambda}_{j_b}
\end{equation}
which shows that the WSR maximizing beamformer is made of the GDEs solution. A similar proof can be followed also to prove that the optimal digital beamformer $\bmU_{k_b}$ can be optimized as the GDEs solution.
\vspace{-3mm}
\section{Proof of Theorem \ref{analog_BF_solution}} \label{AP.2}
To optimize the analog beamformer $\bmW_b$ given the remaining variables to be fixed, we consider the following optimization problem
\begin{equation}  \label{analog_t1}
\begin{aligned}
  \underset{\substack{\bmW_b}}{\max}   \sum_{j_b \in \mathcal{D}_b}  & [w_{j_b} \mbox{ln det}(\bmI + \bmV_{j_b}^H {\bmW_b}^H \mathbf{\Sigma}_{j_b}^{1} \bmW_b \bmV_{j_b} )  \\& -\mbox{Tr}(\bmV_{j_b}^H {\bmW_b}^H \mathbf{\Sigma}_{j_b}^{2}  \bmW_b \bmV_{j_b} )].
\end{aligned}
\end{equation}

To prove the result for the analog beamformer, note that a similar reasoning as for the digital beamformer in Appendix A does not apply directly. This is due to the fact that the KKT condition for the analog beamformer
\eqref{kkt_analog_beamformer} have the form
$\bm{A}_1 \bmW_{b} \bm{A}_2 = \bm{B}_1 \bmW_b \bm{B}_2$ and thus not resolvable for $\bmW_b$. To solve it for the analog beamformer $\bmW_b$, we can apply the identity $\mbox{vec}(\bm{A} \bm{X} \bm{B}) = \bm{B}^T \otimes \bm{A}\mbox{vec}(\bm{X})$ \cite{magnus2019matrix}, which allows to rewrite \eqref{analog_t1} as a function of the vectorized analog beamformers $\bmW_b$, leading to a resolvable KKT condition for $\mbox{vec}(\bmW_b)$. A similar proof from Appendix A, as done in Theorem 3 \cite{sheemar2021practical_HYBF}, can be applied to the vectorized analog beamformer $\bmW_b$, and show that the optimal vectorized analog beamformer is a GDE of the sum of the matrices in \eqref{analog_BF}.

\ifCLASSOPTIONcaptionsoff
  \newpage
\fi

{\footnotesize
\bibliographystyle{IEEEtran}
\def\baselinestretch{0.9}
\bibliography{main}}

% Generated by IEEEtran.bst, version: 1.14 (2015/08/26)
\begin{thebibliography}{10}
\providecommand{\url}[1]{#1}
\csname url@samestyle\endcsname
\providecommand{\newblock}{\relax}
\providecommand{\bibinfo}[2]{#2}
\providecommand{\BIBentrySTDinterwordspacing}{\spaceskip=0pt\relax}
\providecommand{\BIBentryALTinterwordstretchfactor}{4}
\providecommand{\BIBentryALTinterwordspacing}{\spaceskip=\fontdimen2\font plus
\BIBentryALTinterwordstretchfactor\fontdimen3\font minus
  \fontdimen4\font\relax}
\providecommand{\BIBforeignlanguage}[2]{{%
\expandafter\ifx\csname l@#1\endcsname\relax
\typeout{** WARNING: IEEEtran.bst: No hyphenation pattern has been}%
\typeout{** loaded for the language `#1'. Using the pattern for}%
\typeout{** the default language instead.}%
\else
\language=\csname l@#1\endcsname
\fi
#2}}
\providecommand{\BIBdecl}{\relax}
\BIBdecl

\bibitem{wang2018millimeter}
X.~Wang, L.~Kong, F.~Kong, F.~Qiu, M.~Xia, S.~Arnon, and G.~Chen, ``Millimeter
  wave communication: A comprehensive survey,'' \emph{IEEE Commun. Surv.
  Tutor.}, vol.~20, no.~3, pp. 1616--1653, 2018.

\bibitem{roberts2021millimeter_survey}
I.~P. Roberts, J.~G. Andrews, H.~B. Jain, and S.~Vishwanath, ``Millimeter-wave
  full duplex radios: New challenges and techniques,'' \emph{IEEE Wirel.
  Commun.}, vol.~28, no.~1, pp. 36--43, 2021.

\bibitem{palacios2019hybrid}
J.~Palacios, J.~Rodriguez-Fernandez, and N.~Gonz{\'a}lez-Prelcic, ``Hybrid
  precoding and combining for full-duplex millimeter wave communication,'' in
  \emph{IEEE GLOBECOM}, Dec. 2019, pp. 1--6.

\bibitem{lopez2019analog}
R.~L{\'o}pez-Valcarce and N.~Gonz{\'a}lez-Prelcic, ``Analog beamforming for
  full-duplex millimeter wave communication,'' in \emph{IEEE 16th Int. Symp.
  Wirel. Commun. Syst. (ISWCS)}, Aug. 2019, pp. 687--691.

\bibitem{sheemar2021hybrid_interference}
C.~K. Sheemar and D.~Slock, ``Hybrid beamforming and combining for millimeter
  wave full duplex massive {MIMO} interference channel,'' \emph{arXiv preprint
  arXiv:2108.00465}, 2021.

\bibitem{cai2020two}
Y.~Cai, K.~Xu, A.~Liu, M.~Zhao, B.~Champagne, and L.~Hanzo, ``Two-timescale
  hybrid analog-digital beamforming for mmwave full-duplex {MIMO}
  multiple-relay aided systems,'' \emph{IEEE J. Sel. Areas Commun.}, vol.~38,
  no.~9, pp. 2086--2103, Jun. 2020.

\bibitem{luo2021robust}
Z.~Luo, L.~Zhao, L.~Tonghui, H.~Liu, and R.~Zhang, ``Robust hybrid
  precoding/combining designs for full-duplex millimeter wave relay systems,''
  \emph{IEEE Trans. Veh. Technol.}, vol.~70, no.~9, pp. 9577--9582, Jul. 2021.

\bibitem{sheemar2021massive}
C.~K. Sheemar and D.~Slock, ``Massive {MIMO} mmwave full duplex relay for {IAB}
  with limited dynamic range,'' in \emph{11th IEEE NTMS}, Apr. 2021, pp. 1--5.

\bibitem{lopez2019beamformer}
R.~L{\'o}pez-Valcarce and N.~Gonz{\'a}lez-Prelcic, ``Beamformer design for
  full-duplex amplify-and-forward millimeter wave relays,'' in \emph{IEEE 16th
  Int. Symp. Wirel. Commun. Syst. (ISWCS)}, Oct. 2019, pp. 86--90.

\bibitem{da20201}
J.~M.~B. da~Silva, A.~Sabharwal, G.~Fodor, and C.~Fischione, ``1-bit phase
  shifters for large-antenna full-duplex mmwave communications,'' \emph{IEEE
  Trans. Wirel. Commun.}, vol.~19, no.~10, pp. 6916--6931, Jul. 2020.

\bibitem{roberts2020hybrid}
I.~P. Roberts, J.~G. Andrews, and S.~Vishwanath, ``Hybrid beamforming for
  millimeter wave full-duplex under limited receive dynamic range,''
  \emph{arXiv preprint arXiv:2012.11647}, 2020.

\bibitem{sheemar2021practical_HYBF}
C.~K. Sheemar, C.~K. Thomas, and D.~Slock, ``Practical hybrid beamforming for
  millimeter wave massive mimo full duplex with limited dynamic range,''
  \emph{IEEE Open Journal of the Communications Society}, vol.~3, pp. 127--143,
  2022.

\bibitem{scutari2013decomposition}
G.~Scutari, F.~Facchinei, P.~Song, D.~P. Palomar, and J.-S. Pang,
  ``Decomposition by partial linearization: Parallel optimization of
  multi-agent systems,'' \emph{IEEE Trans. Signal}, vol.~62, no.~3, pp.
  641--656, Nov. 2013.

\bibitem{palomar2006tutorial}
D.~P. Palomar and M.~Chiang, ``A tutorial on decomposition methods for network
  utility maximization,'' \emph{IEEE Journal on Selected Areas in
  Communications}, vol.~24, no.~8, pp. 1439--1451, 2006.

\bibitem{scutari2016parallel}
G.~Scutari, F.~Facchinei, and L.~Lampariello, ``Parallel and distributed
  methods for constrained nonconvex optimization—part i: Theory,'' \emph{IEEE
  Trans. Signal}, vol.~65, no.~8, pp. 1929--1944, Dec. 2016.

\bibitem{abrardo2019distributed}
A.~Abrardo, G.~Fodor, and M.~Moretti, ``Distributed digital and hybrid
  beamforming schemes with mmse-sic receivers for the mimo interference
  channel,'' \emph{IEEE Transactions on Vehicular Technology}, vol.~68, no.~7,
  pp. 6790--6804, 2019.

\bibitem{stoica2004cyclic}
P.~Stoica and Y.~Selen, ``Cyclic minimizers, majorization techniques, and the
  expectation-maximization algorithm: A refresher,'' \emph{IEEE Signal Process.
  Mag.}, vol.~21, no.~1, pp. 112--114, Feb. 2004.

\bibitem{sheemar2021game}
C.~K. Sheemar, L.~Badia, and S.~Tomasin, ``Game-theoretic mode scheduling for
  dynamic {TDD} in {5G} systems,'' \emph{IEEE Commun. Lett.}, vol.~25, no.~7,
  pp. 2425 -- 2429, Apr. 2021.

\bibitem{cirik2015weighted}
A.~C. Cirik, R.~Wang, Y.~Hua, and M.~Latva-aho, ``Weighted sum-rate
  maximization for full-duplex mimo interference channels,'' \emph{IEEE
  Transactions on communications}, vol.~63, no.~3, pp. 801--815, 2015.

\bibitem{kuai2019structured}
X.~Kuai, L.~Chen, X.~Yuan, and A.~Liu, ``Structured turbo compressed sensing
  for downlink massive mimo-ofdm channel estimation,'' \emph{IEEE Wirel.
  Commun.}, vol.~18, no.~8, pp. 3813--3826, 2019.

\bibitem{boyd2004convex}
S.~Boyd, S.~P. Boyd, and L.~Vandenberghe, \emph{Convex Optimization}.\hskip 1em
  plus 0.5em minus 0.4em\relax Cambridge university press, 2004.

\bibitem{kim2011optimal}
S.-J. Kim and G.~B. Giannakis, ``Optimal resource allocation for {MIMO} ad hoc
  cognitive radio networks,'' \emph{IEEE Trans. Inf. Theory}, vol.~57, no.~5,
  pp. 3117--3131, May 2011.

\bibitem{hoang2008noncooperative}
D.~Hoang and R.~A. Iltis, ``Noncooperative eigencoding for {MIMO} ad hoc
  networks,'' \emph{IEEE Trans. Signal}, vol.~56, no.~2, pp. 865--869, 2008.

\bibitem{cover1999elements}
T.~M. Cover, \emph{Elements of Information Theory}.\hskip 1em plus 0.5em minus
  0.4em\relax John Wiley \& Sons, 1999.

\bibitem{magnus2019matrix}
J.~R. Magnus and H.~Neudecker, \emph{Matrix Differential Calculus with
  Applications in Statistics and Econometrics}.\hskip 1em plus 0.5em minus
  0.4em\relax John Wiley \& Sons, 2019.

\end{thebibliography}

%\begin{IEEEbiography}{Michael Shell}
%Biography text here.
%\end{IEEEbiography}

% if you will not have a photo at all:
%\begin{IEEEbiographynophoto}{John Doe}
%Biography text here.
%\end{IEEEbiographynophoto}

%\begin{IEEEbiographynophoto}{Jane Doe}
%Biography text here.
%\end{IEEEbiographynophoto}

\end{document}